\theoremstyle{plain}
\newtheorem{obs}[theorem]{Observation}
\theoremstyle{definition}
\newtheorem{defn}{Definition}[section]
\newtheorem{prb}{Problem}
\newcommand{\homo}{\textsc{Hom}$_<$\xspace}
\newcommand{\NP}{\textsf{NP}\xspace}
\newcommand{\wone}{\textsf{W[1]}\xspace}
\newcommand{\XP}{\textsf{XP}\xspace}
\newcommand{\Oh}{\mathcal{O}}
\newcommand{\ret}{\textsc{RET}$_<$}
\newcommand{\core}{\textsc{CORE}$_<$\xspace}
\newcommand{\slice}{\textsc{SLICE}$_{<gh}$}
\newcommand{\allsub}{\textsc{ALLSUB}$_<$}
\newcommand{\sub}{\textsc{SUB}$_{<tu}$}
\newcommand{\PP}{\textsf{P}}
\newcommand{\twosat}{2-\textsf{SAT}}
\newcommand{\corek}{\textsc{CORE}$_<^k$\xspace}
\newcommand{\corechi}{\textsc{CORE}$_<^{\chi^<(G)}$\xspace}
\newcommand{\NN}{\mathbb{N}}
\newcommand{\problemStatement}[3]{%
  \begin{center}
  \begin{tabularx}{\columnwidth}{@{}lX@{}}
  \toprule
  \multicolumn{2}{@{}c@{}}{\textsc{#1}}\tabularnewline
  \midrule
  \bfseries Input:    & #2 \\
  \bfseries Question: & #3 \\
  \bottomrule
  \end{tabularx}
  \end{center}
}
\begin{document}
\title{On Computational Aspects of Cores of Ordered Graphs}
%
%\titlerunning{Abbreviated paper title}
% If the paper title is too long for the running head, you can set
% an abbreviated paper title here
%
\author{Michal \v{C}ert\'{\i}k \inst{1}\orcidID{0009-0008-6880-4896} \and
Andreas Emil Feldmann \inst{2}\orcidID{0000-0001-6229-5332} \and
Jaroslav Ne\v{s}et\v{r}il \inst{1}\orcidID{0000-0002-5133-5586} \and
Pawe\l{} Rz\k{a}\.zewski \inst{3}\orcidID{0000-0001-7696-3848}}
\authorrunning{M. \v{C}ert\'{\i}k, A. E. Feldmann, J. Ne\v{s}et\v{r}il, P. Rz\k{a}\.zewski}
% First names are abbreviated in the running head.
% If there are more than two authors, 'et al.' is used.
%
\institute{Computer Science Institute, Faculty of Mathematics and Physics \\
Charles University\\
Prague, Czech Republic \and
Department of Computer Science, University of Sheffield\\
Sheffield, United Kingdom \and
Warsaw University of Technology and University of Warsaw\\
Warsaw, Poland}
\maketitle              % typeset the header of the contribution
\begin{abstract}

An ordered graph is a graph enhanced with a linear order on the vertex set.
An ordered graph is a \emph{core} if it does not have an order-preserving homomorphism to a proper subgraph. We say that $H$ is the core of $G$ if (i) $H$ is a core, (ii) $H$ is a subgraph of $G$, and (iii) $G$ admits an order-preserving homomorphism to $H$.

We study complexity aspects of several problems related to the cores of ordered graphs.
Interestingly, they exhibit a different behavior than their unordered counterparts.

We show that the \emph{retraction} problem, i.e., deciding whether a given graph admits an ordered-preserving homomorphism to its specific subgraph, can be solved in polynomial time.
On the other hand, it is \NP-hard to decide whether a given ordered graph is a core.
In fact, we show that it is even \NP-hard to distinguish graphs $G$ whose core is largest possible (i.e., if $G$ is a core) from those, whose core is the smallest possible, i.e., its size is equal to the ordered chromatic number of $G$.
The problem is even \wone-hard with respect to the latter parameter.

\keywords{Computational Complexity \and Parameterized Complexity \and Ordered Graphs  \and Homomorphisms \and Ordered Cores}
\end{abstract}

\section{Introduction}

An \emph{ordered graph} is a graph whose vertex set is totally ordered (see example in Figure~\ref{fig:OrdHomsInterval}).

For two ordered graphs $G$ and $H$, an \emph{ordered homomorphism} from $G$ to $H$ is a mapping $f$ from $V(G)$ to $V(H)$ that preserves edges and the orderings of vertices, i.e.,
\begin{enumerate}
    \item for every $uv \in E(G)$ we have $f(u)f(v) \in E(H)$,
    \item for $u,v \in V(G)$, if $u \leq v$, then $f(u) \leq f(v)$.
\end{enumerate}

Note that the second condition implies that the preimage of every vertex of $H$ forms a segment (or \emph{interval}) in the ordering of $V(G)$ (see Figure~\ref{fig:OrdHomsInterval}).

\begin{figure}[ht]
\begin{center}
\includegraphics[scale=0.8]{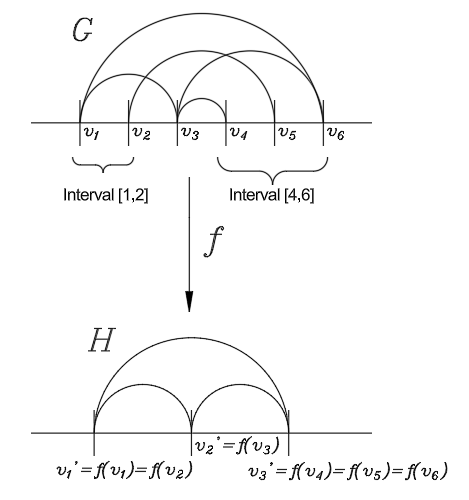}
\end{center}
\caption{Ordered Homomorphism $f$ and Independent Intervals.}
\label{fig:OrdHomsInterval}
\end{figure}

Similarly to unordered graphs (see, e.g. ~\cite{HellNesetrilGraphHomomorphisms}), we can also consider a problem of minimum coloring of $G$ and define \emph{(ordered) chromatic number} $\chi^<(G)$ to be the minimum $k$ such that $V(G)$ can be partitioned into $k$ disjoint independent intervals. The ordered chromatic number is sometimes called \emph{the interval chromatic number} (see ~\cite{DBLP:conf/bcc/Tardos19}).

Now suppose that an ordered graph $H$ is a subgraph of an ordered graph $G$. Let us first start with a definition of an \emph{Ordered Retraction} of $G$ to $H$ being an ordered homomorphism $f:G\to H$ such that $f(v)=v$ for all $v\in V(H)$. If this ordered retraction of $G$ to $H$ exists, we say that $G$ \emph{order-retracts} to $H$ or that $H$ is an \emph{order-retract} of $G$. Notice that, as for digraphs (see \cite{HellNesetrilGraphHomomorphisms}), if ordered retraction exists, we have ordered homomorphism of $G$ to $H$ and inclusion-ordered homomorphism of $H$ to $G$, therefore $G$ and $H$ are (ordered) homomorphically equivalent. Note that this also implies that if $H$ is an ordered core of an ordered graph $G$, then $\chi^<(G)=\chi^<(H)$.

We then define \emph{an Ordered Core} as an ordered graph that does not retract to a proper ordered subgraph.

Lastly, we also define an \emph{Ordered Matching} as an ordered graph $G$ where each vertex has exactly one edge incident to it, and we define an (ordered) \emph{Directed Path} $P_m$ as an ordered graph with $m$ vertices and edges
$E(P_m)=\{v_iv_{i+1}|i=1,\ldots,m-1;v_i\in V(P_m)\}$. Note that ordered homomorphism is injective on a directed path and the image of directed path contains a spanning directed path.

%The forthcoming full version of the article \cite{certik_core_2025} will contain the complete list of results along with their proofs.

\section{Motivation}

Ordered graphs emerge naturally in several contexts: Ramsey theory (~\cite{Nesetril1996}, ~\cite{Hedrlín1967}, ~\cite{balko2022offdiagonal}), extremal graph theory (~\cite{Pach2006}, ~\cite{conlon2016ordered}), category theory (~\cite{Hedrlín1967}, ~\cite{nesetril2016characterization}), among others. Recently, the concept of twin-width in graphs has been demonstrated to be equivalent to NIP (or dependent) classes of ordered graphs (~\cite{bonnet2021twinwidth}, also refer to ~\cite{bonnet2024twinwidth}), creating a link between graph theory and model theory. 

Homomorphisms of ordered graphs both verify and enrich this research avenue: they impose more constraints (compared to standard homomorphisms, see, e.g., ~\cite{HellNesetrilGraphHomomorphisms}) yet exhibit their own complexity (see, e.g., ~\cite{Axenovich2016ChromaticNO}). %Although, on the one hand, the existence of optimal coloring is polynomial, we show the \NP-completeness and \wone-hardness of the general question (see Sections ~\ref{sec:complexity}, ~\ref{sec:polynomialcomp}).

%On the other hand, the decision whether an ordered graph is a core is open, but we show that it is hard for hypergraphs and polynomial for matchings (see Sections ~\ref{sec:core} and ~\ref{sec:matching}). It is known that matchings are interesting (see, e.g. ~\cite{balko2022offdiagonal}, ~\cite{Balko_2020}, ~\cite{conlon2016ordered}) and we devote Section ~\ref{sec:matching} to it. The paper ends with several open problems and outlines of related research.

The study of ordered graphs exhibits a remarkable breadth, extending far beyond its foundational theoretical challenges and structural questions. Work involving ordered structures appears across an impressive cross-section of scientific and technological disciplines. Applications arise in physics \cite{verbytskyi2020hepmc3-80a}, the life sciences \cite{goerttler2024machine-7eb}, and modern AI methodologies, including large language models \cite{ge2024can-da2}, neural networks \cite{guo2019seq2dfunc-c64}, machine learning \cite{goerttler2024machine-7eb}, and self-supervised learning \cite{LimOrderLearning2020}. Connections also emerge in data analysis and subspace clustering \cite{xing2025block-diagonal-073}, systems and networking research \cite{li2015understanding-1cc}, software optimization \cite{romansky2020approach-60b}, and various aspects of computer security, such as malware detection \cite{thomas2023intelligent-c25} and dynamic system-call sandboxing \cite{zhang2023building-6de}. Additional uses appear in business process management \cite{kourani2023business-759}, workflow modeling \cite{kourani2023scalable-a0f}, decision-making frameworks \cite{wang2022improved-f6a}, fault-tolerant distributed systems \cite{chen2023pgs-bft-dcb}, blockchain protocol design \cite{malkhi2023bbca-chain-bfa}, and even educational curriculum development \cite{kuzmina2020curriculum-cdf}. Ordered graph perspectives further influence studies of multi-linear forms \cite{bhowmik2023multi-linear-2f5}, graph grammars \cite{brandenburg2005graph-grammars-2fe}, rigidity and reconstruction problems \cite{connelly2024reconstruction-912}, combinatorial objects such as shuffle squares \cite{grytczuk2025shuffle-a46}, and various tiling theories \cite{balogh2022tilings-d8d}. Together, these examples highlight the extensive and diverse reach of ordered graph theory across contemporary research areas.

Studying the core of an ordered graph is essential in studying the ordered homomorphisms, since it gives a canonical representative (of the smallest in its class) for graph homomorphisms (see, e.g., ~\cite{HELL1992117}, ~\cite{HellNesetrilGraphHomomorphisms}). We show in Theorem ~\ref{thm:Unique}, that every finite ordered graph is homomorphically equivalent to a unique ordered core (up to the isomorphism). %, an ordered subgraph in which every endomorphism is an automorphism. 
This “minimal witness” preserves homomorphism-invariant properties (e.g., colorability as an ordered homomorphism to $K_k$) while stripping away inessential structure, so arguments about density, dualities or chromatic bounds can be carried out on a simpler object without loss of generality (see, e.g., ~\cite{nescer2023duality}). In practice, cores organize graphs into homomorphism equivalence classes, clarify when two graphs encode the same constraint language, and provide a clean target for reductions.

In this way, cores could also be seen as a bridge between combinatorics and computation. In the modern view of constraint satisfaction problems (CSPs), many complexity classifications reduce to structural and algebraic properties of target templates, where passing to a core reveals the essential structure that governs tractability vs. hardness (see, e.g., ~\cite{bodirsky2025graph-5a2}). These perspectives extend beyond unordered (or ordered) graphs to relational structures (including ordered variants, see, e.g., ~\cite{nescerfelrza2023Systems}), so “core-ification” becomes a standard normalization step in both theory and algorithms. Understanding the complexity of these exercises, therefore, becomes crucial in these endeavors.

\section{Ordered Core}
\label{sec:core}

We start with the following statement, analogous to unordered digraphs (see \cite{HellNesetrilGraphHomomorphisms}), providing an equivalent definition for ordered core and its uniqueness per each ordered graph.

\begin{theorem}
\label{thm:Unique}
Let $G$ be an ordered graph. Then an ordered graph $G$ is a core if and only if there is no ordered homomorphism from $G$ to a proper ordered subgraph of $G$. Every ordered graph is homomorphically equivalent to a unique ordered core.
\end{theorem}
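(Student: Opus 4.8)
The plan is to prove the two assertions in turn, mirroring the classical digraph argument while being careful about the order-preservation constraint.

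\textbf{The equivalence.} One direction is immediate: if $G$ order-retracts to a proper subgraph, the retraction is in particular an ordered homomorphism to a proper subgraph, so the contrapositive holds. For the other direction, suppose $f\colon G \to G$ is an ordered homomorphism whose image $H := f(G)$ is a proper subgraph of $G$; I want to produce a \emph{retraction} onto some proper subgraph. Since $V(G)$ is finite, the sequence $f, f^2, f^3, \dots$ is eventually periodic, and some power $g := f^{k}$ is idempotent ($g \circ g = g$) with image $H' := g(G) \subseteq H \subsetneq G$. An idempotent ordered homomorphism is exactly an ordered retraction of $G$ onto the subgraph induced (as an ordered graph with the inherited edge set $g(E(G))$) on $H'$: for $v \in V(H')$ we have $v = g(u)$ for some $u$, hence $g(v) = g(g(u)) = g(u) = v$. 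One subtlety to check: $H'$ as a subgraph of $G$ must really contain all edges of the form $g(x)g(y)$ for $xy \in E(G)$, which it does by definition of the edge set of the image, so $g$ is a homomorphism \emph{onto} $H'$ and fixes it pointwise; composition of ordered homomorphisms is an ordered homomorphism, so $g = f^k$ is order-preserving. Hence $G$ retracts to a proper subgraph, contradicting that $G$ is a core. The only thing that needs a word is why the ``subgraph $H'$ of $G$'' is proper — it is, since $|V(H')| \le |V(H)| < |V(G)|$.

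\textbf{Existence and uniqueness of the core.} For existence, among all ordered subgraphs $H$ of $G$ to which $G$ admits an ordered homomorphism, pick one with $|V(H)|$ minimum (the set is nonempty as it contains $G$). I claim $H$ is a core: if not, by the equivalence just proved $H$ retracts to a proper subgraph $H'$, and composing $G \to H \to H'$ gives an ordered homomorphism from $G$ to the smaller subgraph $H'$ of $G$, contradicting minimality. Moreover $H$ is a subgraph of $G$ and $G \to H$, so $H$ is \emph{the} core of $G$ in the sense of the abstract. For uniqueness, suppose $H_1$ and $H_2$ are both cores of $G$. Then $G \to H_1$ and $G \to H_2$, and since each $H_i$ is a subgraph of $G$ we also have $H_i \to G$, so $H_1 \to G \to H_2$ and symmetrically $H_2 \to H_1$; thus $H_1$ and $H_2$ are homomorphically equivalent cores. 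Now take ordered homomorphisms $\varphi\colon H_1 \to H_2$ and $\psi\colon H_2 \to H_1$. Their composition $\psi\circ\varphi\colon H_1 \to H_1$ is an ordered homomorphism of $H_1$ to itself; since $H_1$ is a core, by the equivalence it cannot have image a proper subgraph, so $\psi\circ\varphi$ is onto, hence (finiteness) a bijection on $V(H_1)$, and likewise $\varphi\circ\psi$ is a bijection on $V(H_2)$. Therefore $\varphi$ is a bijection $V(H_1)\to V(H_2)$; it preserves order, and an order-preserving bijection between two finite linear orders is the unique order isomorphism, so $\varphi$ is a bijection that preserves edges with $\varphi^{-1} = \psi$ also edge-preserving (because $\psi\circ\varphi$ and $\varphi\circ\psi$ being bijective homomorphisms forces them to be automorphisms, hence $\varphi$ maps $E(H_1)$ onto $E(H_2)$). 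Thus $H_1$ and $H_2$ are isomorphic as ordered graphs.

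\textbf{Anticipated main obstacle.} The delicate point is the edge-bookkeeping in the idempotency argument: I must make sure that ``the subgraph $f^k(G)$ of $G$'' is taken with precisely the edge set $\{f^k(x)f^k(y) : xy\in E(G)\}$ so that $f^k$ is a homomorphism \emph{onto} it and fixes it pointwise, and that this set really is a subset of $E(G)$ (it is, since $f$ maps edges to edges). A second, milder, point is justifying that a surjective ordered homomorphism from a finite ordered graph to itself is an automorphism — surjectivity plus finiteness gives bijectivity of the vertex map, order-preservation makes it the identity-type order isomorphism onto its image, and comparing edge counts (a bijective homomorphism from a finite graph to a graph with the same number of edges is an isomorphism) closes it. Everything else is a routine transcription of the unordered/digraph proof, with the observation that order-preservation is closed under composition doing all the extra work the ordered setting requires.
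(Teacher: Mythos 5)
Your proof is correct, but for the main implication (ordered homomorphism to a proper subgraph implies retraction to a proper subgraph) you take a genuinely different route from the paper. The paper picks a homomorphic image $H$ of $G$ with the fewest vertices, argues by minimality that every ordered endomorphism of $H$ is an automorphism, and then composes $f$ with the inverse of its restriction to $H$ to obtain a retraction onto $H$ itself; your argument instead iterates an arbitrary non-surjective endomorphism $f$ until some power $f^k$ becomes idempotent, and observes that an idempotent ordered homomorphism is automatically a retraction onto its image. Both are classical; yours is more self-contained (no minimality or inverse-automorphism step, and it converts \emph{any} non-surjective endomorphism into a retraction, not just one landing in the minimal image), while the paper's version identifies the retract directly with the eventual core. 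Two small points you leave implicit but should state: (i) the image of an ordered homomorphism into a proper subgraph must actually miss a vertex (a vertex-surjective order-preserving self-map of a finite linear order is the identity on vertices and hence preserves all edges), which is what licenses $|V(H)| < |V(G)|$ in your properness check; and (ii) in the uniqueness argument, the bijections $\psi\circ\varphi$ and $\varphi\circ\psi$ are order-preserving bijections of finite linear orders and hence are the identity on vertices, which is the cleanest way to get $\psi = \varphi^{-1}$ and to conclude $\varphi(E(H_1)) = E(H_2)$. The uniqueness part itself is essentially the paper's argument with more detail filled in.
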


\begin{proof}
To prove the left direction, if an ordered graph $G$ retracts to a proper ordered subgraph $H$ then there is an ordered homomorphism $f:G\to H$. Conversely, if there is an ordered homomorphism $f:G\to H$, where $H$ is a proper ordered subgraph of $G$, then let $H$ be a proper ordered subgraph of $G$ with fewest vertices to which $G$ is homomorphic. Then there cannot be an ordered homomorphism of $H$ to its proper subgraph, and therefore any of $H$ to $H$ must be an automorphism of $H$. Let now $f:G\to H$ be an ordered homomorphism. Thus, the restriction $a$ of $f$ to $H$ is an automorphism of $H$ and therefore has an inverse automorphism $a^{-1}$. Then $a^{-1}\circ f$ is an ordered retraction of $G$ to $H$.

To prove the second part of the statement, we notice that if ordered subgraphs $H$ and $H^{'}$ are retracts of $G$ then there exist ordered homomorphisms $f:H\to H^{'}$ and $g:H^{'}\to H$ (from transitivity). If both $H$ and $H^{'}$ are also cores, then $f \circ g$ and $g \circ f$ must be automorphisms (as shown above); hence $H$ and $H^{'}$ must be isomorphic.
\end{proof}

In this article, we will therefore consider a computational problem that we will call \core{}.

\label{Prb:Core}
\begin{prb}
\end{prb}
\problemStatement{\core{}}
  {Ordered graph $G$.}
  {Is there a non-surjective homomorphism $G\to G$?}

Note that, as in \cite{HELL1992117}, the cores are precisely the ordered graphs for which the answer is NO. This formulation has the advantage of giving a problem that belongs to the class \NP, since it is easy to verify in polynomial time that a proposed mapping is indeed a non-surjective ordered homomorphism (and it is a decision problem).

\begin{comment}

Before concluding this section, let us show the following observation.

\begin{obs}
    Let $G$ be an ordered graph and $H$ be the ordered core of $G$. Then $\chi^<(G)=\chi^<(H)$.
\end{obs}

\begin{proof}
    If G -> H, then the interval chromatic number of G must be at most the interval chromatic number of H, and if H is a core of G, then in particular is a subgraph of G and thus it has smaller of equal interval chromatic number. Consequently, there is equality.
\end{proof}

\end{comment}

%In the following, we will denote $n=|V(G)|$ and $h=|V(H)|$.

\section{Polynomial Ordered Core Problems}

As a first step, consider the following problem, which we call \ret{}.

\label{Prb:Core}

\begin{prb}
\end{prb}
\problemStatement{\ret{}}
  {Ordered graphs $G$ and a subset $X\in V(G)$.}
  {Does there exist a function $f:G\to X$, which is identity on $X$, and is an ordered homomorphism from $G$ to $G[X]$?}

\begin{theorem}
\label{thm:RetInPP}
\ret{} is in \PP.
\end{theorem}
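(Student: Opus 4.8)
The plan is to observe that, once $f$ is forced to be the identity on $X$, the image of every remaining vertex is pinned to one of at most two candidates, and then to encode the residual choices as a \twosat instance.

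First I would dispose of the trivial case $X=\emptyset$ (an ordered retraction exists iff $V(G)=\emptyset$) and otherwise write $X=\{x_1<x_2<\dots<x_k\}$. For a vertex $v\in V(G)$ let $\ell(v)$ be the largest element of $X$ that is $\le v$, with $\ell(v)=x_1$ if no such element exists, and symmetrically let $r(v)$ be the smallest element of $X$ that is $\ge v$, with $r(v)=x_k$ if none exists. Order preservation together with $f(x_i)=x_i$ forces $\ell(v)\le f(v)\le r(v)$ for every $v$; since $f(v)\in X$ and no element of $X$ lies strictly between $\ell(v)$ and $r(v)$, we obtain $f(v)\in\{\ell(v),r(v)\}$, a set of size at most two. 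Concretely, every vertex below $x_1$ must map to $x_1$, every vertex above $x_k$ must map to $x_k$, and a vertex strictly between consecutive elements $x_i<x_{i+1}$ of $X$ may map only to $x_i$ or to $x_{i+1}$. (This two-valued picture is exactly where the interval property of ordered homomorphisms recalled in the introduction is used.)

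Next I would verify that feasibility of such an assignment of images is captured by binary constraints. Order preservation is equivalent to demanding $f(u)\le f(v)$ only for pairs $u<v$ with no element of $X$ in the interval $[u,v]$; for every other pair the chain $f(u)\le r(u)\le\ell(v)\le f(v)$ holds automatically. The pairs in question all lie within one ``gap'' of $X$, so for the two extreme gaps the constraint is vacuous (both images are forced) and for a gap between consecutive $x_i<x_{i+1}$ it becomes, after restricting to consecutive vertices and using transitivity, an implication between the corresponding Boolean variables. Edge preservation asks, for each $uv\in E(G)$, that $f(u)f(v)\in E(G[X])$; since $f(u)$ and $f(v)$ each range over a set of size at most two, this is a relation with at most four admissible value pairs, hence a conjunction of at most four clauses of width at most two forbidding the bad pairs (degenerating to unary clauses when an endpoint is in $X$ or has a forced image, and to immediate infeasibility when no admissible pair exists). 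Introducing one Boolean variable per vertex whose candidate set has size two, all constraints together form a \twosat instance of size polynomial in $|V(G)|+|E(G)|$ that is satisfiable if and only if $G$ order-retracts to $G[X]$. Solving it by the standard linear-time algorithm places \ret{} in \PP.

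The part requiring the most care is this equivalence: one must argue that enforcing monotonicity only within each gap of $X$ suffices for global order preservation, and handle the boundary and degenerate cases cleanly (vertices outside the span of $X$, singleton candidate sets, and, if loops are permitted, edges that would be sent to a loop of $G[X]$). The structural reduction of each image to a two-element set is the crux; once it is established, the \twosat encoding and the proof of its correctness are routine.
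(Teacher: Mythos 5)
Your proposal is correct and follows essentially the same route as the paper's proof: pin each vertex outside $X$ to its two neighbouring elements of $X$, encode the choice as a Boolean variable, and express order preservation within each gap plus edge preservation as a polynomial-size \twosat{} instance (with unary clauses for forced/boundary vertices and immediate rejection when an edge has no admissible image). The only cosmetic difference is that you enforce monotonicity via consecutive-vertex implications and transitivity, whereas the paper writes a clause for every pair within a gap; both yield polynomially many clauses.
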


\begin{proof}
We prove this statement by reduction to an instance of \twosat. This problem is defined by its input Boolean formula $\varphi$ in conjunctive normal form (CNF), where each clause has at most two literals, and we ask if there exists a truth assignment to the variables that makes $\varphi$ true. Formally,

$$
\twosat=\{\varphi\in CNF| \text{ each clause has } \le 2 \text{ literals and $\varphi$ is satisfiable}\}.
$$

Unlike general SAT, \twosat ~can be solved in polynomial time. The classical result is due to ~\cite{even1976complexity-b05}, who even showed that \twosat ~is solvable in linear time.

First, we denote an ordered graph $H$ as an induced ordered subgraph of $G$, induced by a subset $X\in V(G)$.

By denoting $n=|G|$ and $h=|H|$, we can separate the vertices of $G$ into two sets.

\begin{enumerate}
    \item With $j\in [h]$, the vertices $v_j$ corresponding to $V(G[X])$.
    \item With $i\in [n-h]$, the vertices $x_i$ corresponding to $V(G)-V(G[X])$.
\end{enumerate}

We will then define sets of vertices $X_k, k\in [h+1]$, where $X_k$ contains all the vertices $x_i$ that are in between $v_{k-1}$ and $v_k$. We notice that there will be maximum $h+1$ nonempty sets $X_k$. Of course, $V(G)-V(G[X])=\bigcup_{k \in [h+1]} X_k$ and $V(G)$ will look as follows:

\begin{align*}
X_1,v_1,X_2,v_2,\ldots,X_h,v_h,X_{h+1}.
\end{align*}

Note that for any $k\in [h+1]$, $X_k$ can be an empty set.
Furthermore, we shall denote $x^l_k, k\in [h+1], l\in [|X_k|]$ as one of the vertices of $V(G)-V(G[X])$ that belongs to $X_k$ and is the $l$-th vertex of $X_k$ (following the order of $G$).

As a first observation, we note that each vertex $x^l_k\in (V(G)-V(G[X]))$ has two options:

\begin{enumerate}
    \item Either it will map to $v_{k-1}\in H$,
    \item or it will map to $v_{k} \in H$.
\end{enumerate}

Notice that the cases of $x^l_1$ and $x^l_{h+1}$ will be only a simplification as they will map to $v_1\in H$ and $v_h\in H$, respectively.

We will therefore establish that for each vertex $x^l_k\in (V(G)-V(G[X]))$ we have a variable $s_i, i\in [n-h]$, which is false if $x^l_k$ maps to $v_{k-1}\in H$ and true if $x^l_k$ maps to $v_{k}\in H$. Similarly to $x_i$, we will denote this variable $s^l_k,k\in [h+1], l\in [|X_k|]$.

We then have the following constraints in our \ret{} problem, and the way we enforce them in our instance of \twosat:

\begin{enumerate}
    \item Let $x^{l_1}_k, x^{l_2}_k\in (V(G)-V(G[X])))$ and $x^{l_1}_k<x^{l_2}_k$. Then, if $x^{l_2}_k$ maps to $v_{k-1}\in V(H)$, $x^{l_1}_k$ must also map to $v_{k-1}$.
    
    We enforce this constraint in our \twosat ~by a clause $(s^{l_2}_k \lor \neg s^{l_1}_k)$ which is equivalent to $(\neg s^{l_2}_k \rightarrow \neg s^{l_1}_k)$. We see that this way if $s^{l_2}_k$ is false and $s^{l_1}_k$ is true, the forbidden combination, the clause is false. We see that by going through all the pairs of vertices, we create at most $\sum_{k=1}^{h+1} \binom{|X_k|}{2}$ clauses.
    
    \item Similarly for the opposite constraint — let $x^{l_1}_k, x^{l_2}_k\in (V(G)-V(G[X]))$ and $x^{l_1}_k<x^{l_2}_k$. Then, if $x^{l_1}_k$ maps to $v_{k}\in V(H)$, $x^{l_2}_k$ must also map to $v_{k}$.
    
    We enforce this constraint in our \twosat by a clause $(\neg s^{l_1}_k \lor s^{l_2}_k)$ which is equivalent to $( s^{l_1}_k \rightarrow s^{l_2}_k)$. We see that this way if $s^{l_1}_k$ is true and $s^{l_2}_k$ is false, the forbidden combination, the clause is false. 
    
    Again, we see that by going through all the pairs of vertices, we create at most $\sum_{k=1}^{h+1} \binom{|X_k|}{2}$ clauses.
    
    \item Let $x^{l_1}_{k_1}, x^{l_2}_{k_2}\in (V(G)-V(G[X])), l_1\in [|X_{k_1}|], l_2\in [|X_{k_2}|], k_1,k_2\in [h+1]$ be vertices in $G$ in between which there is an edge. Then we check four different mappings of $x^{l_1}_{k_1}, x^{l_2}_{k_2}\in V(G)$ to $H$:
    
\begin{align*}
    x^{l_1}_{k_1}\to v_{k_1 - 1} &\text{ and }  x^{l_2}_{k_2}\to v_{k_2 - 1}\\
    x^{l_1}_{k_1}\to v_{k_1} &\text{ and }  x^{l_2}_{k_2}\to v_{k_2 - 1}\\
    x^{l_1}_{k_1}\to v_{k_1 - 1} &\text{ and }  x^{l_2}_{k_2}\to v_{k_2}\\
    x^{l_1}_{k_1}\to v_{k_1} &\text{ and }  x^{l_2}_{k_2}\to v_{k_2}\\
\end{align*}
    
    For each of these cases, if there does not exist an edge $v_i v_j$ in $H$, we create a following restrictive clause in our \twosat:
    
\begin{itemize}
    \item For the case where $x^{l_1}_{k_1}\to v_{k_1 - 1}$ and $x^{l_2}_{k_2}\to v_{k_2 - 1}$, if there is no edge $v_{k_1 - 1} v_{k_2 - 1}$ in $H$, then we create the clause $( s^{l_1}_{k_1} \lor s^{l_2}_{k_2})$. We see that this way, if $s^{l_1}_{k_1}$ is false and $s^{l_2}_{k_2}$ is false, the forbidden combination, the clause is false.
    \item For the case where $x^{l_1}_{k_1}\to v_{k_1}$ and $x^{l_2}_{k_2}\to v_{k_2 - 1}$, if there is no edge $v_{k_1} v_{k_2 - 1}$ in $H$, then we create the clause $(\neg s^{l_1}_{k_1} \lor s^{l_2}_{k_2})$. We see that this way, if $s^{l_1}_{k_1}$ is true and $s^{l_2}_{k_2}$ is false, the forbidden combination, the clause is false.
    \item For the case where $x^{l_1}_{k_1}\to v_{k_1 -1}$ and $x^{l_2}_{k_2}\to v_{k_2 }$, if there is no edge $v_{k_1 - 1} v_{k_2}$ in $H$, then we create the clause $( s^{l_1}_{k_1} \lor \neg s^{l_2}_{k_2})$. We see that this way, if $s^{l_1}_{k_1}$ is false and $s^{l_2}_{k_2}$ is true, the forbidden combination, the clause is false.
    \item For the case where $x^{l_1}_{k_1}\to v_{k_1}$ and $x^{l_2}_{k_2}\to v_{k_2}$, if there is no edge $v_{k_1} v_{k_2}$ in $H$, then we create the clause $(\neg s^{l_1}_{k_1} \lor \neg s^{l_2}_{k_2})$. We see that this way, if $s^{l_1}_{k_1}$ is true and $s^{l_2}_{k_2}$ is true, the forbidden combination, the clause is false.
\end{itemize}

    Notice that $x^{l_1}_{k_1}$ and $x^{l_2}_{k_2}$ can be in the same $X_k$ (when $k_1=k_2$), in which case the clauses from the first and second constraint will also be created.
    
    We see that by going through all these edges in $G$, we create at most $3(n-h)^2$ clauses, since for each edge there are maximum $3$ clauses. This, of course, holds because if there are four, we stop the procedure with the response that the homomorphism does not exist (since there is no edge in $H$ where we could map an edge $x^{l_1}_{k_1}x^{l_2}_{k_2}\in (V(G)-V(G[X]))$).
    
    \item Let $x^l_k \in (V(G)-V(G[X])), l\in [|X_k|], k\in [h+1]$ and $v_j\in V(G[X]), j\in [h]$ be adjacent vertices in $G$.
    Then we check two different mappings of $x^l_k\in V(G)$ to $H$ (since $v_j\in V(G[X])$ always maps to $v_j\in V(H)$):
    
\begin{align*}
    &x^l_k\to v_{k - 1} \\
    &x^l_k\to v_{k}.
\end{align*}

    For each of these cases, if there does not exist an edge in $H$, we create the following restrictive clause in our \twosat:
    
\begin{itemize}
    \item For the case where $x^l_k\to v_{k - 1}$, if there is no edge $v_{k - 1} v_j$ in $H$, then we create a clause $( s^l_k \lor s^l_k)$. We see that this way, if $s^l_k$ is false, the forbidden setting, our clause is false.
    \item For the case where $x^l_k\to v_{k}$, if there is no edge $v_k v_j$ in $H$, then we create a clause $(\neg s^l_k \lor \neg s^l_k)$. We see that this way, if $s^l_k$ is true, the forbidden setting, our clause is false.
\end{itemize}
    
    We see that by going through all these edges in $G$, we create at most $h^2$ clauses, since for each edge there is a maximum one clause (as if there are two then we stop the procedure with the response that the homomorphism does not exist for the same reason as above).
    
    Notice that the case of vertices $v_{j_1}, v_{j_2}\in V(G[X]), j_1, j_2\in [h]$, where there is an edge in $G[X]$, does not need to be treated due to the (identity) character of \ret{} (and therefore this edge also being in $H$).
    
    \item For all vertices $x^{l_1}_1\in X_1, l_1\in [|X_1|]$ and $x^{l_2}_{h+1}\in X_{h+1}, l_2\in [|X_{h+1}|]$, map these to the following vertices in $V(H)$:
    
\begin{align*}
    &x^{l_1}_1\to v_{1} \\
    &x^{l_2}_{h+1}\to v_{h}.
\end{align*}

    For each of these vertices, we create the following clause in our \twosat:
    
\begin{itemize}
    \item For each $x^{l_1}_1\in X_1$, create a clause $( s^{l_1}_1 \lor s^{l_1}_1)$. We see that this way, if $s^{l_1}_1$ is false, the forbidden setting, our clause is false.
    \item For each $x^{l_2}_{h+1}\in X_{h+1}$, create a clause $(\neg s^{l_2}_{h+1} \lor \neg s^{l_2}_{h+1})$. We see that this way, if $s^{l_2}_{h+1}$ is true, the forbidden setting, our clause is false.
\end{itemize}
    
    We see that by going through all these vertices in $X_1$ and $X_{h+1}$, we create the $|X_1| + |X_{h+1}|$ clauses.
    
\end{enumerate}

\paragraph{Equivalence of instances.} Suppose now that we have a solution of our \ret{} problem — an ordered homomorphism $f:G\to H$, where $f$ restricted to $H$ is an identity. As the restrictions applying to our ordered homomorphism $f$ were all reflected in the clauses we created in our \twosat, there cannot be any clause in our \twosat ~with negative value.

Suppose that we have a yes-instance of our \twosat ~constructed following the steps above.

We define $f : V(G) \to V(H)$ as follows. 

\begin{enumerate}
    \item if variable $s^l_k$ is false, then $f$ maps vertex $x^l_k \in (V(G)-V(G[X]))$ to $v_{k-1}\in H$,
    \item if variable $s^l_k$ is true, then $f$ maps vertex $x^l_k \in (V(G)-V(G[X]))$ to $v_{k}\in H$,
    \item Map all the vertices $v_i \in V(G[X])$ to $v_i\in H$
\end{enumerate}

Clearly, $f$ restricted to $G[X]$ is an identity.   

We see that for all vertices $x^l_k \in (V(G)-V(G[X]))$ that are mapped to $v_{k-1}\in H$ whenever the variable $s^l_k$ of our instance of the \twosat ~problem is false, and for all vertices $x^l_k \in (V(G)-V(G[X]))$ that are mapped to $v_{k}\in H$ whenever the variable $s^l_k$ from our instance is true, these respect the vertex ordering (from the first, second and fifth rules above) and preserve all edges (from the third and fourth rules above).

The last thing to address are the vertices $x^l_k \in (V(G)-V(G[X]))$ for which there is no variable $s^l_k$ in any of the clauses of our \twosat ~problem (if any). We notice that this can happen when, e.g., $x^l_k$ is not connected to any other vertex in $G$, $|X_k|=l=1$, and $k\notin \{1,h+1\}$. But this simply means that there is no restriction for these vertices, and they can be mapped to $v_{k-1}$ or $v_{k}\in H$.

As the maximum number of clauses in \twosat ~is the following, the claimed polynomial complexity of \ret{} follows.

$$
2 \sum_{k=1}^{h+1} \binom{|X_k|}{2} + 3(n-h)^2 + h^2 + |X_1| + |X_{h+1}|
$$

\end{proof}

\section{Hard Ordered Core Problems}

Let us start this section by a (standard) definition of \emph{hypergraph}, being a generalization of a graph in which edges can connect any number of vertices, not just two.

Formally, a hypergraph is an ordered pair $H=(V,E)$, where $V$ is a finite set of vertices and $E\subseteq P(V)\setminus \{\varnothing \}$ is a set of hyperedges, each hyperedge being a nonempty subset of $V$.

Then an \emph{ordered hypergraph} is a hypergraph with totally ordered vertex set.

Let us now define an \emph{ordered $k$-uniform hypergraph}, as a hypergraph with totally ordered vertex set, where each hyperedge is of size $k$.

We show \NP-completeness of \core ~for connected ordered $k$-uniform hypergraphs where $k\ge 3$.

\begin{theorem}
\label{thm:corehyper}
    \core ~for connected ordered $k$-uniform hypergraphs is \NP-complete for $k\ge 3$.
\end{theorem}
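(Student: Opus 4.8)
The plan is to prove membership in \NP{} (which, as already observed for \core{}, is immediate: a non-surjective ordered homomorphism $G\to G$ is a certificate of polynomial size, and checking edge-preservation, order-preservation and non-surjectivity takes polynomial time), and then to prove \NP-hardness by a polynomial-time reduction from \textsc{3-Sat} (a variant such as monotone \textsc{Nae-3-Sat} would work equally well and make the clause gadgets symmetric). Concretely, from a formula $\varphi$ with variables $x_1,\dots,x_n$ and clauses $C_1,\dots,C_m$ I would build a connected ordered $k$-uniform hypergraph $G_\varphi$ such that $\varphi$ is satisfiable if and only if $G_\varphi$ is not an ordered core, i.e.\ admits a non-surjective ordered homomorphism to itself. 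I would first do this for $k=3$ and then obtain every $k>3$ from the $k=3$ construction by padding each hyperedge up to size $k$ with brand-new ``dummy'' vertices that occur in no other hyperedge.

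The main tool is a rigid scaffold. Let $H_m$ be the ordered $k$-uniform hyperpath on $v_1<\dots<v_m$ with hyperedges $\{v_i,v_{i+1},\dots,v_{i+k-1}\}$, $1\le i\le m-k+1$. I claim the only ordered homomorphism $H_m\to H_m$ is the identity: an ordered endomorphism $f$ is weakly monotone, so if it is not injective then, since preimages are intervals, $f(v_i)=f(v_{i+1})$ for some $i$; but then a hyperedge containing both $v_i$ and $v_{i+1}$ (which exists as $m\ge k$) is sent to a set of size at most $k-1$, which is not a hyperedge of the $k$-uniform $H_m$ -- contradiction; hence $f$ is injective, and an injective weakly monotone self-map of a finite linear order is the identity. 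Thus $H_m$ is a connected, rigid ordered core, which I use as the backbone of $G_\varphi$. Into its order I insert, at $n$ designated places, one constant-size \emph{choice gadget} per variable: a few extra vertices together with a few hyperedges joining them to nearby scaffold vertices, designed so that, once the scaffold is forced to be the identity, the gadget of $x_i$ has exactly two ordered endomorphisms onto proper subsets of itself, the ``true'' map and the ``false'' map (with the identity on the gadget counted as one of these two as far as the rest of the construction can see). Finally, for every clause $C_j$ I add one hyperedge on the relevant ``literal vertices'' of the three gadgets involved, set up so that this hyperedge is mapped onto a genuine hyperedge precisely when at least one of the three literals of $C_j$ is set to true by the chosen combination of gadget maps.

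For correctness, laying out $G_\varphi$ as alternating scaffold segments and choice gadgets (keeping the extremal vertices of the scaffold at the two ends, and leaving enough consecutive pure hyperpath hyperedges between gadgets that the argument above still pins all scaffold vertices), adding the $m$ clause hyperedges, and padding for $k>3$, the hypergraph is connected (through the scaffold) and $k$-uniform. For the direction ``$\varphi$ satisfiable $\Rightarrow$ $G_\varphi$ not a core'': from a satisfying assignment take $f$ to be the identity on the scaffold and the dummies and the corresponding true/false map on each gadget; it is weakly monotone by construction, preserves the scaffold hyperedges (it is the identity there) and the clause hyperedges (every clause has a true literal), and collapses at least one gadget vertex, hence is a non-surjective ordered endomorphism. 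For ``$G_\varphi$ not a core $\Rightarrow$ $\varphi$ satisfiable'': given a non-surjective ordered endomorphism $f$, one argues in order that (i) $f$ must be the identity on the scaffold, because the scaffold is a rigid core, no gadget or dummy vertex can play the role of a scaffold vertex inside a scaffold hyperedge, and the order pins the two ends; (ii) consequently $f$ acts on each gadget either as the true map or as the false map; (iii) preservation of the clause hyperedges then says that the induced truth assignment satisfies every clause.

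I expect the main obstacle to lie entirely in the ``not a core $\Rightarrow$ satisfiable'' direction: one must design the choice gadgets and the clause hyperedges so that the scaffold really is pinned and the \emph{only} non-identity ordered endomorphisms of $G_\varphi$ are the assignment-encoding ones -- so that ``a non-surjective ordered endomorphism exists'' coincides with ``a satisfying assignment exists''. In particular one has to exclude simultaneously (a) an endomorphism that shifts a scaffold vertex onto a gadget or dummy vertex, (b) a partial fold of one scaffold segment onto another, and (c) a ``mixed'' map on some gadget that is neither the true nor the false map, and one has to check that the padding dummies introduce no new endomorphisms. I would control (a) by making every scaffold vertex sit in a locally unique pattern of hyperedges (so no other vertex can impersonate it), (b) by the hyperpath rigidity argument applied segment by segment, and (c) by a finite case check on the constant-size gadgets. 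Once the gadgets are fixed the remaining verification is routine; engineering the gadgets and proving that the scaffold is genuinely rigid inside the full $G_\varphi$ is the crux.
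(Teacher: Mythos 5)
Your high-level strategy (a rigid hyperpath scaffold, per-variable choice gadgets, one clause hyperedge per clause, dummy-padding for $k>3$) is a legitimate template, and your rigidity argument for the standalone hyperpath $H_m$ is correct. However, there is a genuine gap: the choice gadgets, which carry essentially all of the content of the hardness proof, are never constructed. You state the properties they must have (``exactly two ordered endomorphisms onto proper subsets, the true map and the false map''), and you yourself flag their construction as ``the crux'', but you do not exhibit a single such gadget or argue that one exists in the ordered $3$-uniform setting, nor do you specify the clause hyperedges concretely enough to check that they are preserved exactly when a literal is true. As it stands this is a plan for a proof rather than a proof.

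There is also a specific logical difficulty that the plan does not resolve, and it is exactly the point where the paper has to work hardest. \core{} asks for the \emph{existence} of a non-surjective endomorphism, so in the direction ``not a core $\Rightarrow$ satisfiable'' you only learn that $f$ is non-identity on \emph{at least one} gadget; on every other gadget $f$ may simply be the identity, in which case the clause hyperedges incident to those gadgets map to themselves and are trivially preserved, regardless of any truth assignment. Hence ``one gadget collapses consistently with its own incident clauses, everything else fixed'' would already witness non-coreness, which is much weaker than satisfiability of $\varphi$ --- your step (ii), that $f$ acts on \emph{each} gadget as the true or the false map, does not come for free (and is false for the global identity). You need a mechanism that propagates the collapse from one gadget to all others. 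This is precisely what the paper's proof engineers: it reduces from the \emph{connected} variant of ONE-IN-THREE SAT, makes the clause triple itself the only collapsible object (a ``dynamic'' triple on the third vertices of the three variable gadgets, which can only map onto one of three ``static'' triples encoding which single literal is true), and uses the shared variable vertices together with connectivity of the formula to force that once one dynamic triple moves, all of them must. Until you supply an analogous forcing device and the explicit gadgets, the reduction as described would accept unsatisfiable formulas, so the argument does not yet establish the theorem.
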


\begin{proof}
    We prove \NP-hardness by a reduction from ONE-IN-THREE SAT. We require that the ONE-IN-THREE SAT formula is \emph{connected}, meaning it cannot be partitioned into two or more disjoint sets of clauses that are variable-disjoint. We note that this means that we cannot simplify the problem by breaking it apart into more smaller problems (or formulas) in this way. This promise variant of ONE-IN-THREE SAT is, of course, still \NP-complete by ~\cite{SchaeferSATComplexity1978}.
    
    %cannot be separated into $n_F>1$ independent formulas $F(X_1)\cup \ldots \cup F(X_{n_F}), F=F(X_1)\cup \ldots \cup F(X_{n_F})$, meaning that formula cannot be divided into two or more disjoint sets of clauses, where for each $i,j\in [n_F]$, . We note that if this was the case, we could treat such formula as two or more independent formulas.

    We define an ordered $3$-uniform hypergraph $G$, such that $G$ is not a core (or the there exists non-surjective endomorphism on $G$), if and only if there exists a satisfying assignment of the ONE-IN-THREE SAT formula. 
    
    In the following, we may call a hyperedge of size three simple a \emph{triple}.

    \paragraph{Definition of $G$.}
    Let us define a variable gadget as an ordered hypergraph on four vertices $1,2,3,4$ and a triple $\{1,2,4\}$. We then add a variable gadget in G per each variable in the ONE-IN-THREE SAT formula. Let $v$ be a number of variables in the formula. The graph $G$ will therefore have $4v$ vertices.

    Then for each clause with variables $v_i,v_j,v_k$ in ONE-IN-THREE SAT formula, we create a triple connecting third vertices of variables gadgets corresponding to $v_i,v_j,v_k$, calling this triple \emph{dynamic}. We also create three additional triples connecting second, fourth, and fourth vertices of the variables gadgets corresponding to $v_i,v_j,v_k$, respectively, in the similar manner we create a second triple connecting fourth, second, and fourth vertices of these variable gadgets, and we also create a third triple connecting fourth, fourth, and second vertices of these three variable gadgets, respectively, calling these triples \emph{static}. 
    
    This concludes the definition of $G$. See Figure ~\ref{pic:CoreHyper} for variable gadgets, dynamic triples (with thick edges), and one (out of three) static triple for the blue clause (we do not include more static triples in order to keep the picture readable). Colors illustrate which clause belongs to which triple.

    \begin{figure}[p]\centering
    \includegraphics[width=\textwidth,height=\textheight,keepaspectratio]{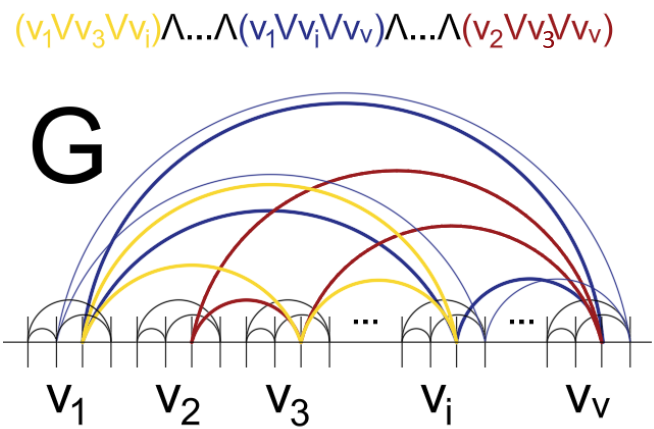}
    \caption{Ordered Hypergraph $G$ from Theorem ~\ref{thm:corehyper}.}
    \label{pic:CoreHyper}
    \end{figure}

    We then note that dynamic triples are the only triples that can be mapped to a subgraph of $G$. We also notice that they can map only to three choices of static triples per each dynamic triple. We will then interpret mapping a dynamic triple to a second vertex of $v_i$ variable gadget, fourth vertex of $v_k$ variable gadget, and fourth vertex of $v_l$ variable gadget as setting a variable $v_i$ to true and variables $v_k, v_l$ to false, and other configurations analogously.

    We then notice that since the ONE-IN-THREE SAT formula is connected, our hypergraph $G$ must be connected, and by mapping one dynamic triple to one of the static triples, we will need to map all the dynamic triples to static triples, where each dynamic triple has only three options of static triples to map to. For contradiction, suppose this is not the case. Then, there is at least one dynamic triple that does not map to one of the static triples. But then either this dynamic triple is not connected to other dynamic triples, in which case $G$ would be disconnected, or all other dynamic triples would not map to static triples. In which case we would not get an ordered non-surjective retraction $G\to G$, as mapping the dynamic triples to the static ones was the only option, a contradiction.

    Also, one easily notes that shared vertices of different dynamic triples must map to the same vertex. Therefore, the same variable in different formulas will map to the same value.

    \paragraph{Equivalence of instances.} Suppose that we have a YES-instance of our ONE-IN-THREE SAT formula. Then for each dynamic triple in its corresponding clause in $G$ we map the first vertex of the dynamic triple to the second or fourth vertex of a variable $v_i$, depending on whether the assignment of the variable $v_i$ in the formula is true or false, respectively, and analogously for the remaining two variables in the clause. In this way, we clearly get the non-surjective ordered retraction $G\to G$.

    \medskip

    Now suppose that we have a non-surjective ordered retraction $G\to G$. Then for each $v_i$ variable gadget that mapped the third vertex to the second vertex of the variable gadget $v_i$, we assign true value to the variable $v_i$ in the ONE-IN-THREE SAT formula. Similarly, for each variable gadget that mapped the third vertex of the variable gadget $v_j$ to the fourth vertex of the variable gadget $v_j$ we assign false value to the variable $v_j$ in the ONE-IN-THREE SAT formula. Since $G$ is connected, the non-surjective ordered retraction $G\to G$ must map all the variables to true or false.

    We then note that by adding $n$ vertices in between the first and second vertex in each variable gadget in $G$ and connecting every static and dynamic triple to these vertices as well as connecting every variable gadget triple to these vertices, we can create $n+3$-uniform hypergraph $G$. It is easy to see that this $n+3$-uniform hypergraph $G$ has the same properties with respect to the reduction from the ONE-IN-THREE SAT problem.

    \core for connected ordered $k$-uniform hypergraphs is in \NP as given its solution (non-surjective homomorphism of the given ordered graph to itself), we can verify it in polynomial time.

    This completes the proof.
\end{proof}

The following result then naturally follows.

\begin{corollary}
\label{cor:rethypergraphsNPC}
    \ret ~for $k$-uniform hypergraph is \NP-complete for $k\ge 3$.
\end{corollary}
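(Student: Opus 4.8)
The plan is to observe that the reduction behind Theorem~\ref{thm:corehyper} is already a reduction to \ret{}: the only thing missing is an explicit name for the target subgraph. Membership in \NP is immediate, since a certificate is a map $f\colon V(G)\to X$ and one checks in polynomial time that $f$ is the identity on $X$, preserves the vertex order, and sends every hyperedge of $G$ to a hyperedge of $G[X]$. For hardness, start from a connected ONE-IN-THREE SAT formula and build the ordered $3$-uniform hypergraph $G$ exactly as in Theorem~\ref{thm:corehyper}: a variable gadget on vertices $1_i,2_i,3_i,4_i$ with triple $\{1_i,2_i,4_i\}$ for each variable $v_i$, one dynamic triple per clause on the third vertices, and the three static triples per clause on the second/fourth vertices $\{2_i,4_j,4_k\}$, $\{4_i,2_j,4_k\}$, $\{4_i,4_j,2_k\}$. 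Then set $X:=V(G)\setminus\{\,3_i : v_i\text{ a variable}\,\}$. Note that $G[X]$ contains precisely the variable‑gadget triples and the static triples, every dynamic triple having lost a vertex, so in any ordered retraction $G\to G[X]$ exactly the third vertices must be relocated. The claim to prove is: $G$ order‑retracts to $G[X]$ if and only if the formula is satisfiable; combined with the \NP‑hardness of connected ONE‑IN‑THREE SAT~\cite{SchaeferSATComplexity1978} this gives the corollary for $k=3$.

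For the forward direction I would take a satisfying assignment, let $f$ be the identity on $X$, and send $3_i$ to $2_i$ if $v_i$ is true and to $4_i$ if $v_i$ is false; since $3_i$ lies between $2_i$ and $4_i$ in the order this preserves the order, the gadget and static triples are fixed setwise, and a clause $\{v_i,v_j,v_k\}$ with exactly one true variable has its dynamic triple mapped onto exactly one of the three static triples, so $f$ is an ordered retraction. For the backward direction, given an ordered retraction $f\colon G\to G[X]$, order‑preservation together with $f(2_i)=2_i$, $f(4_i)=4_i$ forces $f(3_i)$ into the order‑interval between $2_i$ and $4_i$, whose only members in $X$ are $2_i$ and $4_i$ (the gadget is a consecutive block and $3_i\notin X$); declare $v_i$ true iff $f(3_i)=2_i$. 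For each clause $\{v_i,v_j,v_k\}$ the set $\{f(3_i),f(3_j),f(3_k)\}$ is a $3$‑element subset of $\{2_i,4_i,2_j,4_j,2_k,4_k\}$ that must be a hyperedge of $G[X]$, and the only such hyperedges are the three static triples above, each of which makes exactly one of $v_i,v_j,v_k$ true; a variable shared between clauses is the single vertex $3_i$ with a single image, so the assignment is consistent. The case $k>3$ then follows by the same padding as at the end of the proof of Theorem~\ref{thm:corehyper}: add $k-3$ extra vertices to each gadget, place them in $X$, and enlarge every hyperedge by a suitable block of them; being in $X$ they are fixed by every retraction, so the equivalence above is unaffected, and the whole construction is clearly polynomial.

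The step needing the most care is the backward direction, and within it two verifications: first, that $G[X]$ really has no hyperedges beyond the variable‑gadget triples and the static triples, so that a relocated dynamic triple has nowhere to land except a static triple of its own clause; and second, that order‑preservation combined with $f$ being the identity on $X$ genuinely pins $f(3_i)$ to $\{2_i,4_i\}$. Both are essentially the structural bookkeeping already carried out for Theorem~\ref{thm:corehyper}, re‑read with "non‑surjective endomorphism" replaced by "retraction onto $G[X]$", so I expect no genuinely new difficulty, only the need to state the target subgraph $X$ explicitly and re‑run the argument against it.
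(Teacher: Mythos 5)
Your proof is correct and follows essentially the same route as the paper, which likewise observes that the construction of Theorem~\ref{thm:corehyper} is already a reduction to \ret{} once the target subgraph (the hypergraph with the third gadget vertices, and hence the dynamic triples, removed) is named explicitly. Your version is merely more explicit than the paper's two-sentence argument in pinning down $X$ and re-verifying the equivalence against $G[X]$, but no new idea is involved.
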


\begin{proof}
    We see that in the proof of the Theorem ~\ref{thm:corehyper} the ONE-IN-THREE SAT formula has a satisfying assignment if and only if the hypergraph $G$ is not a core and maps to its subgraph $G'$. The vertices (and triples) of the hypergraph $G'$ are always precisely defined within the hypergraph $G$ (as $G$ without all the dynamic edges). The result therefore follows.
\end{proof}

%Before we focus on the \core{} problem, we will consider the following computational problem which we will call the \sub{}.

Let us now define a computational problem, which we will call \slice{}. It will be naturally related to the \core problem, where the retract of $G$ has a particular number of vertices and edges. Let $n=|V(G)|, m=|E(G)|$.

\begin{prb}
\label{Prb:slicegh}
\problemStatement{\slice{}}
  {Ordered graph $G$ and functions $g(|V(G)|):\mathbb{N}\to \mathbb{N}$ and $h(|V(G)|,|E(G)|):\mathbb{N}^2\to \mathbb{N}$, where $ g(|V(G)|)< |V(G)|; h(|V(G)|,|E(G)|)< |E(G)|$.}
  {Does there exist an ordered graph $H\subset G; |V(H)|=g(|V(G)|); |E(H)|=h(|V(G)|,|E(G)|)$, such that there exists an ordered homomorphism $f:G\to H$?}
\end{prb}

The following result proves the \NP-completeness of the \slice, or again, in other words, the \core problem under the assumption that the retract is of a particular size.

\begin{theorem}
\label{thm:Subgraph}
\slice{} is \NP-complete.
\end{theorem}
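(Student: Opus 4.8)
The plan is to establish membership in \NP{} directly and \NP{}-hardness by a reduction from \textsc{One-in-Three Sat}, in the spirit of the reduction behind Theorem~\ref{thm:corehyper} but producing an ordinary ordered graph. Membership is immediate: a witness is a subgraph $H\subseteq G$ together with a map $f:V(G)\to V(H)$, and one checks in polynomial time that $|V(H)|=g(|V(G)|)$, that $|E(H)|=h(|V(G)|,|E(G)|)$, and that $f$ preserves edges and the order. (We assume, as is needed for the problem to lie in \NP{}, that $g$ and $h$ are presented so that they can be evaluated in polynomial time.)

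For the hardness reduction, given a formula $\varphi$ with variable set of size $v$ and $c$ clauses, I would build an ordered graph $G$ consisting of the following pieces placed consecutively along the order: a \emph{rigid part} $S$ (for instance a directed path, which by the remark on directed paths admits only the identity as an ordered endomorphism), used both to keep $G$ connected and to prevent the gadgets from folding across one another; one \emph{variable gadget} for each variable; and one \emph{clause gadget} for each clause, attached to the gadgets of its three variables. Each variable gadget for $x$ should have exactly two proper collapsing retractions, a \emph{true}-fold and a \emph{false}-fold, that delete the \emph{same} number of vertices and the \emph{same} number of edges, while leaving the gadget uncollapsed is also possible; the clause gadget for $(x\vee y\vee z)$ should admit a collapsing retraction — again deleting a number of vertices and edges that does not depend on the assignment — exactly when precisely one of its three incident variable gadgets uses its \emph{true}-fold, and should force vertices shared between gadgets to be mapped consistently, as in Theorem~\ref{thm:corehyper}. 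Letting $\alpha=\alpha(v,c)$ and $\beta=\beta(v,c)$ denote the numbers of vertices and of edges removed when \emph{every} gadget is collapsed in one of these intended ways, and noting that $n=|V(G)|$ and $m=|E(G)|$ are determined by $v$ and $c$, I would set $g(n):=n-\alpha$ and $h(n,m):=m-\beta$ (defining $g,h$ on all other inputs arbitrarily, subject to $g(n')<n'$ and $h(n',m')<m'$).

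For correctness: if $\varphi$ has a one-in-three satisfying assignment, then collapsing each variable gadget by its \emph{true}- or \emph{false}-fold according to the assignment and then collapsing each clause gadget (legal since exactly one literal per clause is true) is an ordered homomorphism from $G$ onto a subgraph $H\subseteq G$ with exactly $n-\alpha$ vertices and $m-\beta$ edges, so $(G,g,h)$ is a yes-instance. Conversely, if $H\subseteq G$ satisfies $|V(H)|=n-\alpha$, $|E(H)|=m-\beta$ and $G\to H$, then, since also $H\hookrightarrow G$, $H$ is homomorphically equivalent to $G$; because the rigid part cannot be folded and each variable (resp.\ clause) gadget can lose at most its intended share of vertices and only through an intended fold, reaching a subgraph this small forces \emph{every} variable gadget to be collapsed by one of its two folds and \emph{every} clause gadget to be collapsed consistently, and declaring a variable \emph{true} iff its gadget used the \emph{true}-fold yields a one-in-three satisfying assignment of $\varphi$. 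Together with membership in \NP{}, this proves the theorem.

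The step I expect to be the main obstacle is the design of the ordered-graph gadgets together with a proof of their \emph{rigidity}: I must ensure that the only ordered homomorphisms of $G$ into a subgraph as small as $(g(n),h(n,m))$ are the intended "one fold per gadget" maps, so that no accidental small retract exists when $\varphi$ is unsatisfiable, and simultaneously that the two folds of a variable gadget and the fold of a clause gadget each remove exactly the same numbers of vertices and edges regardless of the assignment, so that a single pair $(g,h)$ pins down satisfiability. Encoding the one-in-three condition through ordered homomorphisms of a graph — using only the fact that preimages are intervals — rather than through $3$-uniform hyperedges as in Theorem~\ref{thm:corehyper}, is where most of the technical effort lies; an alternative route would be to reduce directly from \core{} for connected ordered $3$-uniform hypergraphs (Theorem~\ref{thm:corehyper}) by simulating each triple with a small ordered-graph gadget of fixed size and taking $g,h$ to be the resulting core size, but making such a simulation respect the vertex order is itself delicate.
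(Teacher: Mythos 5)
Your overall strategy --- membership in \NP{} plus a reduction from \textsc{One-in-Three Sat} in which $g$ and $h$ are tuned so that only the ``intended'' collapses can reach a subgraph of the prescribed size --- is exactly the strategy of the paper. However, what you have written is a plan rather than a proof: the variable gadgets, the clause gadgets, and the rigidity argument are specified only by the properties you would like them to have (``exactly two proper collapsing retractions of equal size,'' ``admits a collapsing retraction exactly when precisely one literal is true,'' ``can lose at most its intended share of vertices and only through an intended fold''), and you explicitly defer their construction as the main obstacle. Since designing ordered-graph gadgets with these properties and proving that no accidental small retract exists is essentially the entire technical content of the theorem, this is a genuine gap. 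In particular, it is not obvious that a variable gadget with \emph{two} distinct folds of identical vertex- and edge-cost exists, nor that a clause gadget can be made to ``collapse iff exactly one incident gadget true-folds'' while keeping the image a subgraph of $G$; neither claim is self-evident in the ordered setting, where images of gadgets can acquire extra edges and cease to be subgraphs.

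For comparison, the paper avoids both difficulties by a simpler design. The variable gadget is the $4$-vertex ordered graph with edges $\{1,3\},\{2,4\},\{3,4\}$, whose only homomorphic images that remain subgraphs are the identity and a collapse to $P_2$; ``true'' is encoded as the collapse and ``false'' as the identity, so no second fold is needed. Consistency of a variable across its occurrences is enforced by \emph{external} edges forming two cycles through all its occurrences (this is why each variable is required to appear in at least three clauses). Most importantly, there is no clause gadget that ``detects'' the one-in-three condition: the condition is enforced purely arithmetically, because collapsing one, two, or three variable gadgets inside the same clause gadget removes $4$, then $3$, then $2$ of the clause edges, so achieving the exact edge count $h(n,m)=m-6c$ together with the exact vertex count $g(n)=n-c$ forces exactly one collapse per clause. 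If you want to complete your proof along your own lines, you would need to either construct the gadgets you postulate and prove their rigidity, or adopt this counting mechanism, which replaces the hardest of your postulated gadget properties by an edge-budget argument.
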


\begin{proof}
In order to prove that our problem is \NP-hard, we again use a reduction from ONE-IN-THREE SAT. We will require that each variable is in at least three clauses. This will clearly preserve the \NP-complete nature of the ONE-IN-THREE SAT problem (see, e.g., ~\cite{SchaeferSATComplexity1978}).

%As there is a polynomial number of options for variables occurring only in one and two variables, we see that this variation of the ONE-IN-THREE SAT problem is still \NP{}-complete.

\paragraph{Definition of $G$.}
To construct an ordered graph $G$, let us first construct a variable gadget as an ordered graph with four vertices $1,2,3,4$ (in this order) and an edge set 
$$
E(G)=\{\{1,3\},\{2,4\},\{3,4\}\}.
$$
%where the first is connected to the third, the second to the fourth, and the third to the fourth. 
We shall call these edges within the variable gadget the \emph{variable edges}.

For clause gadget, glue three of these variable gadgets together, where by gluing we will mean that the last vertex of the previous (in the order of $G$) variable gadget is identified with the first vertex of the following variable gadget. Each clause gadget will therefore have three variable gadgets, and we glue the $c$ clause gadgets ($c$ being a number of clauses in the formula) so that again the last vertex of the previous clause gadget is identified with the first vertex of the following clause gadget. The ordered graph $G$ will therefore have $n=c(3\times3)+1$ vertices. 

Next, we will connect the second vertex of each variable gadget with second and third vertices in other variable gadgets within the same clause gadget, and the third vertex of each variable gadget with second and third vertices in other variable gadgets within the same clause gadget. We shall call these edges \emph{clause edges}.

For each variable $x$ in our formula, we identify its variable gadgets in $G$ (from the definition of our problem, there need to be at least three of them). Let us say that the variable $x$ occurs in the $o_x$ clauses. We will then connect the second vertex of the $i$-th occurrence of the variable gadget $x$ with the second vertex of the $i+1$-th occurrence of the variable gadget $x$, where $i=1,\ldots,o_x-1$ and we also connect the second vertices of the last and first occurrence of the variable gadget $x$. We will then also connect the third vertex of the $i$-th occurrence of the variable gadget $x$ with the third vertex of the $i+1$-th occurrence of the variable gadget $x$ , where $i=1,\ldots,o_x-1$ and we also connect the third vertices of the last and first occurrences of the variable gadget $x$. We shall call these edges \emph{external edges}.

Notice that a variable does not necessarily need to be in every clause; therefore, we connect variable gadgets only to the same variable gadgets in other clause gadgets. As each variable occurs in three or more clauses, these edges will form (two per variable $x$) cycles of length $o_x$. Let us denote $m=|E(G)|$.

\paragraph{Definition of $g(n)$ and $h(n,m)$.} 
Let us now define a set of ordered graphs $\mathcal{H}$. We first define the functions $g$ and $h$ as follows.

\begin{align*}
&g(n)=n - \frac{n-1}{9},\\
&h(n,m)=m - 6\times\frac{n-1}{9}.
\end{align*}

We see that with $n=9c+1$, these functions are in $\mathbb{N}$ and that $g(n)$ is equal to the difference between $n$ and the number of clauses $c=\frac{n-1}{9}$ (and $h(n,m)=m-6c$). Hence, since $c>0$, we get $g(n)< n$ and $h(n,m)< m$.

We then define a class of ordered graphs $\mathcal{H}$ that will contain all the proper ordered subgraphs of $G$ on the $g(n)$ vertices and the $h(n,m)$ edges.

\paragraph{Equivalence of Instances.} 
We now claim that there exists a solution to ONE-IN-THREE SAT with each variable in at least three clauses if and only if there is an ordered homomorphism $f$ from $G$ to one of the ordered subgraphs in $\mathcal{H}$. But before we get to the proof of this equivalence, we notice a couple of observations about the nature of $G$ and the ordered graphs in $\mathcal{H}$.

The first observation is that each variable gadget in $G$ can be mapped to three possible ordered graphs. One is identity, the second is $P_2$, and the third is $K_3$. We notice that only the first two will be an ordered homomorphism image that is a subgraph of $G$ (since $K_3$ would add an edge from the first to the third vertex).

Therefore, the only way to map $G$ to its proper ordered subgraph with the restrictions of our problem is to map some of its variable gadgets to $P_2$. By mapping a variable gadget to $P_2$ we decrease the number of vertices by one and also the number of variable edges by one. As a result, since we are decreasing the number of vertices by $c$ (from the definition of $g(n)=n-c$), we will also decrease the number of variable edges by exactly $c$.

Notice that the ordered graphs in $\mathcal{H}$ are not on the $\chi(G)$ vertices, because the $\chi(G)=c(3\times3)+1-3c=5c-1$, since each variable gadget on four vertices can be mapped to an ordered graph on three vertices, and there are $3c$ variable gadgets in $G$. In our problem, we are mapping only the $c$ variable gadgets to ordered graphs $P_2$, instead of the $3c$ variable gadgets.

Now we need to ensure that if we map a variable gadget in one clause to $P_2$, we will map that variable gadget to $P_2$ also in all other clauses, where this variable appears. But this is ensured by the external edges connecting the same variables gadgets in different clauses.

Assume otherwise, or that a variable gadget in $G$ is mapped to $P_2$ in a clause gadget and the same variable gadget in the other clause gadget (connected by two external edges) is mapped by identity. But then the variable gadget mapped to $P_2$ will have an additional edge coming from the second vertex, and the ordered homomorphic image will not be a subgraph of $G$. Now, if both variables are mapped by identity, the resulting graph is a subgraph, and if they are both mapped to $P_2$ the resulting graph is also a subgraph. Therefore, the same variables always need to map the same way.

Here we note an important observation that by having each variable in at least three clauses, by decreasing the number of vertices by $c$ we will decrease the number of external edges also by $c$, as we are mapping $c$ variable gadgets to $P_2$ and for each variable $x$ occurring in $o_x, o_x>2$ clauses, we have $2o_x$ external edges formed by two cycles of size $o_x$ which are mapped to each other. Thus, again, by decreasing the number of vertices by $c$ we will also decrease the number of external edges by $c$.

Another point we will need to ensure is that only one variable gadget per each clause gadget in $G$ will be mapped to $P_2$ (from the definition of the ONE-IN-THREE SAT). But we notice that by mapping one variable gadget to $P_2$ in a clause, we decrease the number of clause edges by four. By mapping two variable gadgets in the same clause to $P_2$, the second variable gadget mapped to $P_2$ will decrease the number of clause edges by three. Lastly, by mapping three variable gadgets in the same clause to $P_2$, the last variable gadgets mapped to $P_2$ will decrease the number of clause edges by two.

We notice that since the number of variable edges has already decreased by $c$ and external edges by $c$ too, we need to decrease the number of remaining clause edges by $4c$ (since $h(n,m)=m-6c$). Therefore, since we are decreasing the number of vertices by $c$, this is only possible if by mapping a variable gadget in $G$ to $P_2$, we decrease the number of clause edges by four. But this is only possible if all the $c$ variable gadgets in $G$ mapped to $P_2$ are in different clauses.

Now we should be ready to show our equivalence. Let us assume that there is a true ONE-IN-THREE SAT formula with each variable in at least three clauses. Constructing $G$ as described above and for each true variable mapping its variable gadgets to $P_2$, we obtain our ordered retraction from $G$ to $H\in \mathcal{H}$ as defined in our \slice{} problem. 

Conversely, having an non-surjective ordered retraction $f$ from $G$ to $H\in \mathcal{H}$ as defined above, we simply choose a true value for every variable gadget mapped by $f$ to $P_2$ and a false value for every variable gadget mapped by identity.

In order to prove that this ordered retraction $f:G\to H, H\in \mathcal{H}$ must satisfy $f(v)=v$ for all $v\in V(H)$ we first see that because each variable gadget contains $P_2$ and these $P_2$ paths are connected by a common vertex, we cannot map a vertex of one variable gadget to another variable gadget. The only variable gadgets in $G$ that are not mapped by identity are the variable gadgets that are mapped to $P_2$. This can happen only by mapping a third vertex to a second vertex of the variable gadget, or vice versa.

We see that mapping a third vertex to a second would not be an identity on $H$ as this would connect the first and second vertex of the variable gadget and the first and second vertex are not connected in the variable gadget in $G$. We notice that by mapping the second vertex of the variable gadget in $G$ to the third, we get the identity on $H$. But since the third vertex can be mapped to the second vertex within the variable gadget in $G$, if and only if the second vertex can also be mapped to the third vertex within the same variable gadget and these different operations result in the same ordered graph $H$, we can assume that the latter operation is always performed. In this way, an ordered homomorphism $f:G\to H, H\in \mathcal{H}$ must satisfy $f(v)=v$ for all $v\in V(H)$, and we get an ordered retraction $f$.

We also see that by construction, all the ordered graphs in $\mathcal{H}$ must be proper ordered subgraphs of $G$.

Our problem is in \NP, since given its solution, we can verify it in polynomial time. In particular, we can verify whether $H$ has $g$ vertices and $h$ edges, and by Theorem ~\ref{thm:RetInPP}, whether $G\to G(X)=H$ is an ordered retraction.

This completes the proof.
\end{proof}

This result will now help us derive the complexity of a more general problem.

First, let us define a \emph{doubletuple}, as a set of two sets of $l$ and $k$ integer functions, respectively, with all functions dependent on $n\in \NN$.
$$
\{t,u\}=\{\{t_1(n),t_2(n),\ldots,t_l(n)\},\{u_1(n),u_2(n),\ldots,u_k(n)\}\}
$$ 

Putting $n\in \NN, t\in\{t,u\}$, we will then denote by $\{n-t\}$, a set of integer functions $\{n-t_1(n),n-t_2(n),\ldots,n-t_l(n)\}$. By a set of graphs on $\{n-t\}$ vertices and $\{m-u\}; n,m \in \NN; t,u \in\{t,u\}$ edges we will then mean the set of graphs on a product of $\{n-t\}$ vertices and $\{m-u\}$ edges, product meant in its usual sense $\{n-t\}\times \{m-u\}$. I.e., as ordered graphs $G$, such that, %on $n-t_1(n)$ vertices and $m-u_1(n)$ edges, and $n-t_1(n)$ vertices and $m-u_2(n) \text{ edges, and } \ldots,m-u_k(n)$ edges, etc. up to ordered graphs on the $n-t_l(n)$ vertices and $m-u_1(n) \text{ edges, and }m-u_2(n) \text{ edges, and } \ldots,m-u_k(n)$ edges).

$|V(G)|=n-t_1(n)$ and $|E(G)|=m-u_1(n)$, or

$|V(G)|=n-t_1(n)$ and $|E(G)|=m-u_2(n)$, or

$\ldots$

$|V(G)|=n-t_1(n)$ and $|E(G)|=m-u_k(n)$, or

$|V(G)|=n-t_2(n)$ and $|E(G)|=m-u_1(n)$, or

$\ldots$

$|V(G)|=n-t_2(n)$ and $|E(G)|=m-u_k(n)$, or

$\ldots$

$|V(G)|=n-t_l(n)$ and $|E(G)|=m-u_1(n)$, or

$\ldots$

$|V(G)|=n-t_l(n)$ and $|E(G)|=m-u_k(n)$.

Let us now define the following computational problem, which we will call the \sub{}. We will denote $n=|V(G)|$ and $m=|E(G)|$.

\begin{prb}
\label{Prb:sub}
\end{prb}
\problemStatement{\sub{}}
  {Ordered graph $G$ and a doubletuple $\{t,u\}$, where $0<t_1(n)<n,0<t_2(n)<n,\ldots,0<t_l(n)<n$ and $0<u_1(n)<m,0<u_2(n)<m,\ldots,0<u_k(n)<m$.}
  {Let $\mathcal{H}$ be a class of proper ordered subgraphs of $G$ on $\{n-t\}$ vertices and $\{m-u\}$ edges. Does there exist $H\in \mathcal{H}$ such that there exists an ordered retraction $f:G\to H$?}

%We notice that within this problem we can define any class $\mathcal{H}$ of proper subgraphs of $G$, defined by any set of number of vertices and edges, \core{} being a special case. Therefore, if \core{} is \NP{}-complete then \sub{} is \NP{}-complete and if \sub{} is in \PP{} then \core{} is in \PP{}. Also, in case both tuples contain only one element, this will be our problem \slice{}. The following complexity result then follows. \mce{Here might be a problem with a number of functions as these are defined beforehand and it might not cover all the options of numbers of vertices/edges for CORE subgraphs}

\begin{corollary}
\sub{} is \NP-complete.
\end{corollary}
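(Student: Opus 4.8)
The plan is to show \sub{} is \NP-hard by exhibiting the retraction version of \slice{} as a special case of it, and then to argue membership in \NP by a direct certificate. For hardness, recall that the reduction in the proof of Theorem~\ref{thm:Subgraph} sends a ONE-IN-THREE SAT formula (connected, each variable in at least three clauses) to an ordered graph $G$ together with values $g(n), h(n,m)$, and establishes that the formula is satisfiable if and only if $G$ admits an ordered retraction onto some proper ordered subgraph with exactly $g(n)$ vertices and $h(n,m)$ edges. Taking the doubletuple $\{t,u\}$ with $l = k = 1$, $t_1(n) = n - g(n)$, and $u_1(n) = m - h(n,m)$ makes the class $\mathcal{H}$ of the resulting \sub{} instance equal to exactly that family of subgraphs, and the side conditions $0 < t_1(n) < n$, $0 < u_1(n) < m$ hold because the reduction of Theorem~\ref{thm:Subgraph} always produces $1 \le g(n) < n$ and $1 \le h(n,m) < m$ (on its instances $n = 9c+1$, $g(n) = 8c+1$, and $h(n,m) = m - 6c$ with $m > 6c$). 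Hence the same reduction, read as a reduction into \sub{}, proves \NP-hardness.

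For membership in \NP, the plan is to use as a certificate an ordered subgraph $H$ of $G$ --- given by a vertex set $X \subseteq V(G)$ and an edge set $F$ supported on $X$ --- together with a map $f : V(G) \to X$. The verifier checks in polynomial time that $H$ is indeed a subgraph of $G$; that $|X| = n - t_i(n)$ for some $i \in [l]$ and $|F| = m - u_j(n)$ for some $j \in [k]$ (so in particular $|X| < n$ and $H$ is proper), which requires the functions of the doubletuple to be evaluable in polynomial time, as is implicitly assumed for $g,h$ in \slice{}; and that $f$ is an ordered homomorphism $G \to H$ fixing $X$ pointwise, i.e.\ an ordered retraction --- edge images, order-preservation on all pairs, and the identity condition are each checkable directly. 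A polynomial-size certificate verified in polynomial time gives $\sub{} \in \NP$, so combined with the hardness above, \sub{} is \NP-complete.

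The only point requiring care is the mismatch that \slice{} is phrased with an ordered homomorphism while \sub{} asks for an ordered retraction: one must invoke the fact that the backward direction of Theorem~\ref{thm:Subgraph} actually rules out \emph{every} ordered homomorphism from $G$ onto a wrong-size subgraph (not merely retractions), whereas its forward direction already yields a retraction, so the two formulations coincide on the constructed instances. A secondary, purely formal point is to make explicit the convention that the integer functions in the doubletuple are presented in a form admitting polynomial-time evaluation, matching the convention already used implicitly for $g$ and $h$ in \slice{}. Everything else is routine bookkeeping.
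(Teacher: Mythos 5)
Your proof is correct and follows essentially the same route as the paper: specialize \sub{} to singleton doubletuples so that it coincides with \slice{}, invoke the hardness from Theorem~\ref{thm:Subgraph}, and verify a retraction certificate for membership in \NP. Your explicit handling of the homomorphism-versus-retraction mismatch and of the side conditions on $t_1,u_1$ is more careful than the paper's brief argument, but it fills in details the paper's proof of Theorem~\ref{thm:Subgraph} already supplies rather than changing the approach.
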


\begin{proof}
It is easy to see that \sub{} is a generalization of \slice{}, since we can construct a reduction from \slice{} to \sub{} simply by restricting the doubletuples $\{t, u\}$ so that $t$ contains only one function, as well as $u$ contains only one function (both dependent on $n$). We see that we can do this since $\{n-t\}$ is the set of integer functions dependent on $n$ and $\{m-u\}$ is the set of integer functions dependent on $n$ and $m$, which was the case for two functions $g$ and $h$, respectively, in \slice{}. This would give us an identical problem: if we have the solution to our \slice{}, we will have the solution to our \sub{}, with a restricted size of doubletuples containing exactly two functions (with the problem ~\ref{Prb:sub} criteria), and vice versa.

We also see that \sub{} is in \NP as given its solution, we can verify it in polynomial time, using the same arguments as in the proof of Theorem ~\ref{thm:Subgraph}.
\end{proof}

%We therefore cannot use this result to directly prove the complexity of \core{}.

We see that \core is, of course, "easier" than \sub, since \sub ~is a generalization of \core.
Searching for a reduction from \slice to \core (to prove the hardness of \core) is also not that straightforward, since while \slice{} is looking for a particular size of the subgraphs, \core{} seeks any of them.
%This is why \core{} could also be considered easier than \slice{},
  %  (and of course also \sub{})
%since while \slice{} is looking for a particular size of the subgraphs, \core{} seeks any of them.

Consequently, determining the complexity of \core for ordered graphs (or $2$-uniform hypergraphs) remains unresolved using the above approaches (which is not to say that the above results (in our view) are not interesting). Let us therefore try to choose a different way.

Let us with the definition of the following computational problem, which we will denote \corechi. %This will be a \core problem parameterized by $|V(H)|=\chi^<(G)$.

\begin{prb}
\label{Prb:CorechiOne}
\problemStatement{\corechi}
  {Ordered graph $G$ with $\chi^<(G)=k$.}
  {We want to decide between two cases:
  \begin{enumerate}
      \item The core of $G$ has $k$ vertices.
      \item $G$ is a core.
\end{enumerate}}
\end{prb}

We will prove the \NP-hardness of \corechi and \wone-hardness of \corechi parameterized by $k$.
This result implies, in particular, \NP-hardness of the \core and the \wone-hardness of \core parameterized by the number of vertices $|V(H)|$ of the image ordered subgraph $H$ of $G$.

Let us start with definitions of the following class of ordered graphs. The ordered graph $G$ is \emph{edge-collapsible} if and only if the only ordered subgraph $H$ of $G$, for which there exists an ordered homomorphism $G\to H$, is $P_2=H$.

\label{def:CollapsibleM4}

Next, let $M^C_4$ be an ordered graph on eight vertices $v_1,\ldots,v_8\in V(M^C_4)$ and the following edges,
$
E(M^C_4)=\{\{v_1,v_6\},\{v_2,v_8\},\{v_3,v_5\},\{v_4,v_7\}\}.
$

Before we proceed further, we note that, for any ordered matching, each edge can be uniquely referred to by one of the vertices to which it is incident. For example, we can refer to an edge $\{v_3,v_5\}\in E(M^C_4)$ by the vertex $v_3$. We will use the following notation.
Let $M$ be an ordered matching; then we denote an edge $(v_i,v_j)\in E(M);v_i,v_j\in V(M)$ as $v_i^e$ or $v_j^e$.
We can then refer to a set of edges $\{v_{i_1},v_{i_2}\},\ldots,\{v_{i_3},v_{i_4}\}$ in ordered matching as, e.g., $v^e_{i_1},\ldots,v^e_{i_3}$. We also note that matchings play a crucial role in the area of ordered homomorphisms (see ~\cite{nescer2023duality}, ~\cite{certik_matching_2025}).

We start with the following lemma. %We will omit a proof of this lemma, which is easy to verify; it will be included in a forthcoming full version of the article.

\begin{lemma}
\label{lem:MC4Collapsible}
    $M^C_4$ is edge-collapsible.
\end{lemma}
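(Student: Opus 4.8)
The plan is to classify \emph{all} ordered homomorphisms $f\colon M^C_4\to M^C_4$ and to show that each one is either the identity or sends all of $V(M^C_4)$ onto a single edge; the latter images are copies of $P_2$ sitting inside $M^C_4$. Since $M^C_4$ does admit an ordered homomorphism onto such a copy of $P_2$ (collapse $v_1,\dots,v_4$ onto $v_3$ and $v_5,\dots,v_8$ onto $v_5$, which maps every edge onto $\{v_3,v_5\}$), this yields precisely that the only proper ordered subgraph of $M^C_4$ arising as a homomorphic image of $M^C_4$ is $P_2$, i.e.\ that $M^C_4$ is edge-collapsible.

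First I would record that $M^C_4$ is an ordered matching whose edges $a=\{v_1,v_6\}$, $b=\{v_2,v_8\}$, $c=\{v_3,v_5\}$, $d=\{v_4,v_7\}$ have left endpoints $v_1<v_2<v_3<v_4$ (belonging to $a,b,c,d$ respectively) and right endpoints $v_5<v_6<v_7<v_8$ (belonging to $c,a,d,b$ respectively). Because $f$ is order-preserving and $M^C_4$ has no loops, the two endpoints of an edge cannot be identified by $f$; hence $f$ carries each edge onto an edge, sending left endpoint to left endpoint and right endpoint to right endpoint, so $f$ induces a self-map $\phi$ of the edge set $\{a,b,c,d\}$ from which $f$ is uniquely recovered. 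Conversely, any $\phi$ gives back such an $f$, and this $f$ preserves edges automatically; so classifying the admissible maps $\phi$ is the whole task, and admissibility amounts exactly to $f$ being order-preserving.

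The key step is to encode each edge $e$ by the pair $(L(e),R(e))$, where $L(e)$ is the rank of its left endpoint among the four left endpoints and $R(e)$ the rank of its right endpoint among the four right endpoints; explicitly $a=(1,2)$, $b=(2,4)$, $c=(3,1)$, $d=(4,3)$. Since in $M^C_4$ every left endpoint precedes every right endpoint, the chain $f(v_1)\le f(v_2)\le\cdots\le f(v_8)$ is equivalent to the two sub-chains $L(\phi(a))\le L(\phi(b))\le L(\phi(c))\le L(\phi(d))$ and $R(\phi(c))\le R(\phi(a))\le R(\phi(d))\le R(\phi(b))$, the one remaining inequality $f(v_4)\le f(v_5)$ being automatic. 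Reading these chains pairwise forces, among other things, $\phi(a)\le_{\mathrm{cw}}\phi(b)$ and $\phi(c)\le_{\mathrm{cw}}\phi(d)$ in the coordinatewise order, and $\phi(a),\phi(b)\sqsubseteq\phi(c)$ as well as $\phi(b)\sqsubseteq\phi(d)$ in the order $x\sqsubseteq y$ meaning $L(x)\le L(y)$ and $R(x)\ge R(y)$. A case distinction on the value of $\phi(c)$ then finishes the argument: only $a,b,c$ lie $\sqsubseteq$-below $c$ and only $b,d$ lie $\sqsubseteq$-below $d$, so each choice of $\phi(c)$ pins down $\phi(a)$ and $\phi(b)$, and then $\le_{\mathrm{cw}}$ together with the $\sqsubseteq$-constraints pins down $\phi(d)$ as well; the outcome is that exactly five maps $\phi$ survive, namely the four constant maps $\phi\equiv a,b,c,d$ and the identity.

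I expect this last case analysis to be the only part that needs care, chiefly to verify that the two chains, together with the handful of comparabilities they imply, really do leave no possibility beyond those five; the translation of order-preservation into the chain conditions, and everything after the classification, are routine. To conclude, $\phi=\mathrm{id}$ gives $f=\mathrm{id}_{M^C_4}$, whereas $\phi\equiv e$ gives the $f$ sending $v_1,\dots,v_4$ to the left endpoint of $e$ and $v_5,\dots,v_8$ to its right endpoint, whose image is the single edge $e\cong P_2$. Hence the only homomorphic images of $M^C_4$ inside itself are $M^C_4$ and copies of $P_2$, which is the assertion of the lemma.
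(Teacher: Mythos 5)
Your proof is correct, but it is organized differently from the paper's. The paper rules out the non-$P_2$ images by the number of edges they would have: it first exhibits the collapse onto $P_2$, then checks all six pairs of edges that could merge to leave a three-edge image, and all four triples that could merge to leave a two-edge image, observing in each case that some further edge is dragged along. You instead classify \emph{all} ordered endomorphisms of $M^C_4$ at once, using the (correct) observation that, since $M^C_4$ is a perfect matching with all left endpoints preceding all right endpoints, an endomorphism is the same thing as a self-map $\phi$ of the edge set that is monotone with respect to both the left-endpoint order $a,b,c,d$ and the right-endpoint order $c,a,d,b$; with the coordinates $a=(1,2)$, $b=(2,4)$, $c=(3,1)$, $d=(4,3)$ the two chain conditions indeed leave exactly the four constant maps and the identity. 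I verified the case analysis; the only point where your sketch is slightly too strong is the claim that the choice of $\phi(c)$ alone pins down $\phi(a)$ and $\phi(b)$ -- in the case $\phi(c)=c$ one has $\phi(a),\phi(b)\in\{a,b,c\}$ and must first split on $\phi(d)\in\{c,d\}$ (forced by $c\le_{\mathrm{cw}}\phi(d)$) before $\phi(a)$ and $\phi(b)$ are determined, yielding the constant map $c$ and the identity respectively -- but the conclusion of exactly five endomorphisms stands. Your route replaces the paper's somewhat informal ``there is always at least one other edge preventing that'' with explicit, checkable inequalities and additionally determines the full endomorphism monoid of $M^C_4$, which is strictly more information than the lemma requires; the paper's version is shorter to state and closer in spirit to the edge-merging argument reused in the inductive step for $M^C_i$.
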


\begin{proof}
    For contradiction, assume that $M^C_4$ is not edge-collapsible.

    There are three ways in which $M^C_4$ can be not edge-collapsible.

    \begin{enumerate}
        \item If $M^C_4$ is an ordered core. However, there is an ordered homomorphism from $M^C_4$ to its ordered subgraph $P_2$.
        \item If there exists an ordered homomorphism that maps $M^C_4$ to an ordered subgraph $M^C_4[X]$ of $M^C_4$, where $|E(M^C_4[X])|=3$. This is only possible if there is a pair of edges in $M^C_4$ that maps to one of these edges in $M^C_4$. Since ordered matching $M^C_4$ is fixed and finite, there is a finite number of these pairs, and we can investigate them one-by-one. The following are all the options.

        $$
        \{\{v^e_1,v^e_2\},\{v^e_2,v^e_3\},\{v^e_3,v^e_4\},\{v^e_1,v^e_3\},\{v^e_2,v^e_4\},\{v^e_1,v^e_4\}\}
        $$
        
        However, none of these pairs of edges in $M^C_4$ can be mapped to the one of the edges within the pair, since there is always at least one other edge in $M^C_4$ preventing that (in the sense that also this other edge would need to map together with them if this was to be an ordered homomorphism to an ordered subgraph of $M^C_4$).

        \item If there exists an ordered homomorphism that maps $M^C_4$ to an ordered subgraph $M^C_4[X]$ of $M^C_4$, where $|E(M^C_4[X])|=2$. This is only possible if there are three edges in $M^C_4$ that map to one of these edges in $M^C_4$. The following are all the options.

        $$
        \{\{v^e_1,v^e_2,v^e_3\},\{v^e_1,v^e_2,v^e_4\},\{v^e_1,v^e_3,v^e_4\},\{v^e_2,v^e_3,v^e_4\}\}
        $$
        
        However, none of these triples of edges in $M^C_4$ can be mapped to the one of the edges within the triple, since there is always at least one other edge in $M^C_4$ preventing that.
    \end{enumerate}

Therefore, we arrive to a contradiction and $M^C_4$ is edge-collapsible, since we also showed that it can be mapped to $P_2$.
\end{proof}

Notice that there does not exist an ordered matching $M; M\to P_2; |E(M)|=3$
that is edge-collapsible (since all such matchings can be mapped to their ordered subgraph with \emph{two} edges).

It is also easy to see that all ordered matchings 
$M; M\to P_2 ; |E(M)|=2$
are edge-collapsible (since $M\to P_2$ ensures that $M$ can be mapped to an edge).

We now define in an iterative way the following set of ordered matchings $M^C_i, |E(M^C_i)|=i, i>3$.
Let $M^C_4$ be an ordered matching defined above. Then an ordered matching $M^C_i,i>4$ is defined in the following way.

Take an ordered matching $M^C_{i-1}$ with vertices $v_1,\ldots,v_{2(i-1)}$. If $i$ is odd, add the vertex $w$ between the vertices $v_{i-1}$ and $v_i$ in $M^C_{i-1}$, add the vertex $v$ in front of the first vertex $v_1\in M^C_{i-1}$, and connect $v$ and $w$ by an edge $e$. If $i$ is even, add the vertex $w$ after the vertex $v_{2(i-1)}\in M^C_{i-1}$, add the vertex $v$ in front of the first vertex $v_1\in M^C_{i-1}$, and connect $v$ and $w$ by an edge $e$.
%Add vertex $w$ between vertices $v_{i-1}\in M^C_{i-1}$ and $v_{2(i-1)}\in M^C_{i-1}$, or right behind vertex $v_{2(i-1)}\in M^C_{i-1}$, so that if $v_1\in M^C_{i-1}$ is adjacent to vertex $v_{1_j}\in M^C_{i-1}$, then add vertex $w$ so that there is at least one other vertex between $w$ and $v_{1_j}\in M^C_{i-1}$. Add vertex $v$ in front of the first vertex $v_1\in M^C_{i-1}$. Connect $v$ and $w$ by an edge $e$. 
$M^C_{i},i>4$ is then a (disjoint) union of $M^C_{i-1}$ and $e$. It is clear that each $M^C_{i},i>4$ is well defined.

\begin{comment}

We now point out the following observation. A proof of this observation is also straightforward and will be provided in an upcoming full version of the article.

\begin{obs}
\label{obs:MCiExist}
    For every $i>3$, there exists an ordered matching $M^C_i$.
\end{obs}

\begin{proof}
    Let us prove the statement by induction on a number of edges $i$. The base case $i=4$ clearly holds by Definition ~\ref{def:CollapsibleM4}.
    
    If the statement holds for $i$, then the only way $M^C_{i+1}$ could not exist is if there is no possible position for vertex $w$ from Definition ~\ref{comede}. This is because the vertex $v$ can always be added in front of $v_1\in M^C_{i}$.

    Note that by definition, there are $i+1$ positions where the vertex $w$ could be placed in $M^C_{i+1}$. The vertex $v_{1_j}\in M^C_{i}$ from Definition ~\ref{comede} can prevent at most two of those positions (in the sense that $w$ cannot be placed right next to $v_{1_j}$), since for all other positions there is at least one other vertex between $w$ and $v_{1_j}\in M^C_{i}$. But since $i>3$, the vertex $w$ can be always added to graph $M^C_{i}$, the edge $v,w$ can always be created, and therefore by definition, the ordered matching $M^C_{i+1}$ always exists.
\end{proof}

\end{comment}

%Notice that based on Definition ~\ref{comede} and the proof of Observation ~\ref{obs:MCiExist}, also ordered matching $M^C_{3}$ can be constructed. However, since we are interested only in edge-collapsible ordered graphs, we do not include it in Definition ~\ref{comede}.

Let us now show that all ordered matchings $M^C_i, |E(M^C_i)|=i, i>3$ are edge-collapsible. %A proof of this lemma will not be provided here, but it will appear in the complete version of the article.

\begin{lemma}
\label{lem:MCiCollapsible}
    For every $i>3, i\in \NN$, $M^C_i$ is edge-collapsible.
\end{lemma}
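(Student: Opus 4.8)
The natural approach is induction on $i$, using Lemma~\ref{lem:MC4Collapsible} as the base case $i=4$. For the inductive step, suppose $M^C_{i-1}$ is edge-collapsible for some $i>4$, and recall that $M^C_i$ is obtained from $M^C_{i-1}$ by adding one new edge $e=\{v,w\}$, where $v$ is placed before every vertex of $M^C_{i-1}$ and $w$ is inserted either into the ``middle'' (if $i$ odd) or after every vertex (if $i$ even). The first thing I would record is that $M^C_i \to P_2$: this is immediate because the whole graph is an ordered matching admitting an order-preserving $2$-coloring (put the ``left'' endpoint of each edge in the first interval and the ``right'' endpoint in the second; one checks that the placement of $v,w$ keeps this a valid partition into two independent intervals, so $\chi^<(M^C_i)=2$). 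So it only remains to rule out every order-preserving homomorphism $f:M^C_i\to H$ where $H$ is a proper ordered subgraph of $M^C_i$ with $|E(H)|\ge 2$.

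The key observation I would use is that any ordered homomorphism collapses some set of edges of $M^C_i$ onto a single edge, and — as in the proof of Lemma~\ref{lem:MC4Collapsible} — a set of edges can be collapsed to one edge only if their left endpoints form an interval in the ordering and their right endpoints form an interval in the ordering (since preimages of the two endpoints of the target edge must each be intervals). I would argue by cases on whether the new edge $e=\{v,w\}$ participates in a nontrivial collapse. Case 1: $e$ is not collapsed with any other edge. Then $f$ restricted to the edges of $M^C_{i-1}$ is an ordered homomorphism of $M^C_{i-1}$ into an ordered subgraph with at least $2$ edges (the image of $e$ is one of the retained edges); but $e$'s endpoints $v$ (leftmost) and $w$ must be fixed-point-like enough that this restriction genuinely is a homomorphism $M^C_{i-1}\to M^C_{i-1}$ to a proper subgraph with $\ge 2$ edges — contradicting the inductive hypothesis that $M^C_{i-1}$ is edge-collapsible. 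Case 2: $e$ is collapsed together with at least one edge of $M^C_{i-1}$, say $e$ collapses with an edge $\{v_a,v_b\}$. Here I would exploit the placement of $v$ and $w$: because $v$ lies strictly before every vertex of $M^C_{i-1}$, for the left endpoints $\{v, v_a\}$ to lie in one interval together we would need $v_a$ to be among the leftmost vertices, and then the right endpoint $w$ — which by construction is placed precisely so that there is always another edge ``straddling'' it — forces a third edge into the collapsing set, and iterating this straddling argument forces more and more edges in, eventually including edges whose endpoints cannot all be squeezed into two intervals. This is exactly the combinatorial reason $w$ was inserted at the specific position in the definition of $M^C_i$, and it mirrors the ``there is always at least one other edge preventing that'' refrain in Lemma~\ref{lem:MC4Collapsible}.

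The main obstacle is making the Case 2 argument rigorous and uniform across the odd/even alternation in the construction: one has to verify, for the two placements of $w$ (middle vs.\ far right) and the placement of $v$ at the far left, that no interval-respecting collapse involving $e$ can be ``small'' — i.e., that forcing $e$ in forces in a cascade of straddling edges until the left-endpoint-interval and right-endpoint-interval conditions become simultaneously unsatisfiable. I would handle this by a clean invariant: if a collapsible set $S$ of edges of $M^C_i$ has $|S|\ge 2$ and respects the interval conditions, then $S$ cannot contain $e$ unless $S$ contains \emph{all} edges of $M^C_i$, because the alternating insertions guarantee that every edge other than $e$ is ``interleaved'' with at least one neighbor on each relevant side, so the interval spanned by left endpoints of $S$ and the interval spanned by right endpoints of $S$ must each grow to cover everything. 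Combined with Case~1 (handled purely by the induction hypothesis) and the fact $M^C_i\to P_2$, this establishes that the only proper ordered subgraph $H$ with $M^C_i\to H$ is $H=P_2$, so $M^C_i$ is edge-collapsible, completing the induction.
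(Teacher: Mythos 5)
Your overall strategy mirrors the paper's proof: induction on $i$ with Lemma~\ref{lem:MC4Collapsible} as the base case, the observation that $\chi^<(M^C_i)=2$ gives $M^C_i\to P_2$, and a dichotomy according to whether the new edge $e=\{v,w\}$ is collapsed with an edge of $M^C_{i-1}$ or not. Your Case~2 is essentially the paper's argument (the paper makes it concrete by noting that $w$ and the partner $v_{1_j}$ of $v_1$ have at least one vertex $v_{w1_j}$ between them, so identifying $e$ with $v_1^e$ drags $v^e_{w1_j}$ and $v^e_{1_j}$ into the same fibre and the induction hypothesis applies), although your ``cascading/straddling invariant'' is asserted rather than proved and would need the interval conditions spelled out for both placements of $w$.

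The genuine gap is in your Case~1. You claim that if $e$ is not collapsed with any other edge, then $f$ restricted to $M^C_{i-1}$ maps onto a subgraph with at least two edges, contradicting the induction hypothesis. That inference does not hold: edge-collapsibility of $M^C_{i-1}$ \emph{permits} the restriction to collapse all of $M^C_{i-1}$ onto a single edge, and if simultaneously $e$ is mapped to itself, the total image is a two-edge proper subgraph of $M^C_i$ --- the induction hypothesis yields no contradiction. Worse, for the construction as literally defined this case actually occurs: for $i=5$ the vertex $w$ is inserted between $v_4$ and $v_5$ of $M^C_4$, which is exactly the unique split point of the $2$-interval colouring of $M^C_4$, and the map sending $v\mapsto v$; $v_1,v_2,v_3,v_4\mapsto v_1$; $w\mapsto w$; $v_5,v_6,v_7,v_8\mapsto v_6$ is an ordered homomorphism of $M^C_5$ onto the proper subgraph with edges $\{v,w\}$ and $\{v_1,v_6\}$ (all five edges land on one of these two, and all fibres are intervals appearing in the correct order). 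So $M^C_5$ admits an ordered homomorphism to a proper subgraph other than $P_2$, and Case~1 cannot be closed. Note that the paper's own proof shares this omission: it asserts that the only way $M^C_{i+1}$ could fail to be edge-collapsible is if $e$ is identified with another edge, silently excluding the ``$e$ fixed, everything else collapses'' homomorphism above. To rescue the lemma one would have to place $w$ so that it does not sit at a valid $2$-colouring split of $M^C_{i-1}$ (or otherwise block the simultaneous collapse), and any correct proof of Case~1 must address this configuration explicitly rather than appeal to the induction hypothesis.
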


\begin{proof}
    Let us prove the statement by induction on a number of edges $i$. 
    
    The base case for $i=4$ holds by Lemma ~\ref{lem:MC4Collapsible}.

    Assume that the statement holds for $i$. By definition, the new edge $e$ is added in a way that $\chi^<(M^C_{i+1})=2$, since $v$ is the leftmost vertex and $w$ is between $v_i\in M^C_{i}$ and $v_{i+1}\in M^C_{i}$ or $w$ is placed after the rightmost vertex $v_{2i}\in M^C_{i}$. Therefore, $M^C_{i+1}$ can map to $P_2\subset M^C_{i+1}$.

    The remaining part is to show that $M^C_{i+1}$ cannot map to any other ordered subgraph than $P_2$. We prove this by contradiction and assume that $M^C_{i+1}$ is not edge-collapsible.
    
    We observe that by adding the edge $e$ to $M^C_{i}$, we cannot make the ordered subgraph $M^C_{i}$ of $M^C_{i+1}$ \emph{not} edge-collapsible. Therefore, the only way $M^C_{i+1}$ could be not edge-collapsible is if the edge $e$ mapped to some other edge in $M^C_{i}$, or the edge in $M^C_{i}$ mapped to $e$. We notice that the only edge where $e$ could possibly map (or the edge that could map to $e$) is the edge $v^e_1\in M^C_{i}$, since the vertex $v$ has only the vertex $v_1\in M^C_{i}$ next to it.

    Let us denote a vertex $v_{1_j}\in M^C_{i}$, the vertex adjacent to $v_1\in M^C_{i}$. However, since by construction, the vertices $w$ and $v_{1_j}$ have at least one vertex $v_{w1_j}\in M^C_{i}$ between them, at least the edges $v^e_{w1_j}$, $e$, and $v^e_{1_j}$ must be mapped together to one of these three edges if this was ordered homomorphism to an ordered subgraph of $M^C_{i+1}$. But by the induction hypothesis, since $M^C_{i}$ is edge-collapsible, mapping edges $v^e_{w1_j}$ and $v^e_{1_j}$ to one edge is not possible without mapping all other edges in $M^C_{i}$ to just one edge, and we arrive to a contradiction.

    We only add that the first step of the iteration when $i=5$, $M^C_{5}$ is edge-callapsible, using the same arguments as above and Lemma ~\ref{lem:MC4Collapsible}.
\end{proof}

Now we should be ready to prove the main statement.

\begin{theorem}
\label{thm:corechiNPC}
    The following hardness results hold. The \corechi problem
    \begin{enumerate}
        \item      is \NP-hard,
    \item  is \wone-hard, when parameterized by $k$,
    \item cannot be solved in time $2^{o(|V(G)|)}$ nor in time $n^{o(k)}$, unless the ETH fails.
    \end{enumerate}
\end{theorem}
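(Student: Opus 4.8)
The plan is to establish all three items with two closely related polynomial reductions, each producing an ordered graph $G$ that always satisfies the promise of \corechi: NO-instances are mapped to graphs that are cores, and YES-instances to graphs whose core is the directed path $P_{\chi^<(G)}$ (which has exactly $\chi^<(G)$ vertices). For item~1, and for the $2^{o(|V(G)|)}$ part of item~3, I would reduce from ONE-IN-THREE SAT, restricted as in Theorems~\ref{thm:corehyper} and~\ref{thm:Subgraph}, and make the reduction of \emph{linear} size. For item~2, and for the $n^{o(k)}$ part of item~3, I would give a parameterized reduction from \textsc{Multicolored Clique} on an $N$-vertex graph with $k$ colour classes, arranged so that $\chi^<(G)=\Theta(k)$ and $|V(G)|=N^{\Oh(1)}$. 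Recall that \textsc{Multicolored Clique} is \wone-hard and, under ETH, admits no $N^{o(k)}$-time algorithm, while ONE-IN-THREE SAT (inheriting the Sparsification Lemma from $3$-SAT) admits no $2^{o(n+m)}$-time algorithm under ETH.

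\textbf{The construction.} In both cases $G$ is assembled, in the order of $G$, from \emph{choice gadgets} — one per variable, resp.\ one per colour class, each behaving like the four-vertex variable gadget of Theorem~\ref{thm:Subgraph} that can fold onto $P_2$ or stay put — together with a \emph{backbone} of edges joining consecutive gadgets (so that a total fold yields exactly a copy of $P_{\chi^<(G)}$, which is a core) and with \emph{consistency edges} encoding the clauses, resp.\ the non-adjacencies of the clique instance, which are present in $G$ but disappear as soon as the incident choice gadgets fold in a consistent way. The new ingredient compared with Theorem~\ref{thm:Subgraph} is that every two gadgets that must move together are linked by a copy of an edge-collapsible matching $M^C_i$; by Lemma~\ref{lem:MCiCollapsible}, any ordered endomorphism of $G$ restricted to such a copy is either the identity or collapses that whole copy onto a single edge. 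Hence $G$ admits \emph{no} non-surjective endomorphism that is only a ``partial'' fold: any non-surjective endomorphism must fold \emph{every} gadget, and its image can then only be the backbone path $P_{\chi^<(G)}$. In other words $G$ is not a core if and only if $G\to P_{\chi^<(G)}$, which is exactly the dichotomy of \corechi.

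\textbf{Correctness and the chromatic number.} I would then verify the two directions: (i) a one-in-three satisfying assignment (resp.\ a multicoloured clique) prescribes how to fold each choice gadget, and the consistency edges guarantee that the folded image is a subgraph of $G$ — exactly as the variable gadgets fold onto $P_2$ in Theorem~\ref{thm:Subgraph} — so $G$ retracts onto $P_{\chi^<(G)}$ and its core has $\chi^<(G)$ vertices; (ii) conversely, by the previous paragraph any non-surjective endomorphism of $G$ is a total fold onto $P_{\chi^<(G)}$, and since every consistency edge survives the fold unless the corresponding local configuration is consistent, reading off the chosen values yields a satisfying assignment (resp.\ a clique). Finally $\chi^<(G)=k$ is checked by exhibiting a partition of $V(G)$ into $k$ independent intervals (the upper bound) and by noting that the backbone makes $P_k$ a subgraph of $G$, whence $\chi^<(G)\ge\chi^<(P_k)=k$; this is also what pins the folded core down to $P_k$. (The retraction step uses Theorem~\ref{thm:RetInPP} only to keep the verification in \NP.)

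\textbf{Consequences and main obstacle.} The ONE-IN-THREE SAT reduction is linear, $|V(G)|=\Oh(n+m)$, so \NP-hardness follows and a $2^{o(|V(G)|)}$-time algorithm for \corechi would refute ETH; the \textsc{Multicolored Clique} reduction is a parameterized reduction with $\chi^<(G)=\Theta(k)$, giving \wone-hardness parameterized by $k$, and since $|V(G)|=N^{\Oh(1)}$, an $n^{o(k)}$-time algorithm would yield an $N^{o(k)}$-time algorithm for \textsc{Multicolored Clique}, again refuting ETH. The delicate part — where Lemmas~\ref{lem:MC4Collapsible} and~\ref{lem:MCiCollapsible} carry the weight — is to choose the linking matchings $M^C_i$ and their placement so that (a) inserting them does not raise the ordered chromatic number above $k$ (which constrains where their vertices may sit relative to the gadgets), (b) edge-collapsibility of each copy is preserved \emph{inside} $G$, i.e.\ in the presence of the extra incident backbone and consistency edges the only proper subgraph a copy of $M^C_i$ folds into is still $P_2$, and (c) the \emph{only} non-surjective endomorphism of $G$ is the intended total fold onto $P_k$, occurring precisely for YES-instances. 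Establishing (b) and (c) rigorously is the main work; the size bookkeeping and the reductions themselves are routine.
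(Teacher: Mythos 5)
Your \textsc{Multicolored Clique} branch is, in outline, the paper's actual proof: a single parameterized reduction in which each colour class becomes a choice gadget carrying a copy of an edge-collapsible matching $M^C_i$ (Lemma~\ref{lem:MCiCollapsible}), separated by pinned backbone vertices, so that any non-surjective endomorphism must collapse every gadget to a single representative and the image has exactly $\chi^<(G)$ vertices. However, two points in your sketch are genuine gaps rather than routine bookkeeping. First, the image of a total fold cannot be the bare path $P_{\chi^<(G)}$: the edges you use to encode the clique condition must map to edges of the image, and in the paper's construction they map onto a $K_k$ spanned by the selected vertices (these are exactly the \emph{original} edges, which exist only if the selected vertices form a clique in $F$). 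So the correct dichotomy is ``$G$ is a core, or its core has $\chi^<(G)$ vertices,'' not ``$G\to P_{\chi^<(G)}$''; encoding \emph{non-adjacencies} as edges that ``disappear'' is not an available mechanism, since an ordered homomorphism never deletes an edge, it only needs its image to be present. Second, the heart of the proof is precisely what you defer as items (b) and (c): one must show that the separator vertices are fixed (the paper's Lemma~\ref{lem:pitopiNPC}), that collapse of one gadget \emph{propagates} to all others --- the paper does this with a degree argument on the complete edges between the $B_i$ sets, since a degree-one vertex adjacent to a collapsed $B_i$ would otherwise have to map onto a vertex of degree $k-1$, so the image would not be a subgraph --- and that Lemma~\ref{lem:MCiCollapsible} still applies once the matching sits inside $G$ with extra incident edges (which requires first confining each $D_i$ and $A_i$ to map into itself). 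Without the propagation step the promise of \corechi{} fails: a partial fold could yield an image with more than $\chi^<(G)$ vertices.

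The separate ONE-IN-THREE SAT reduction you propose for \NP-hardness and the $2^{o(|V(G)|)}$ bound is both unnecessary (the paper extracts all three items from the one Clique reduction, whose size is linear in $N=k\ell$ once one pads so that $\ell\ge k$) and problematic as sketched: the variable gadgets of Theorem~\ref{thm:Subgraph} enforce ``exactly one true per clause'' only through the vertex- and edge-count budget of \slice{}, and \corechi{} gives you no such budget. This is exactly the obstacle the paper points out when explaining why the \slice{} machinery does not yield hardness of \core{} and why it switches to \textsc{Multicolored Clique}; your sketch does not supply a replacement mechanism for the one-in-three constraint.
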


\begin{proof}
To prove the statement, we proceed with the reduction from the \textsc{Multicolored Clique} problem. The problem is defined as follows. Let $F$ be the instance graph whose vertex set is partitioned into $k$ subsets $V_1, V_2,\ldots,V_k$, each $V_i,i \in [k]$ containing an independent set of vertices. We ask if $F$ has a clique, intersecting every set $V_i,i\in [k]$.

By copying some vertices if necessary, without loss of generality, we may assume that for each $i \in [k]$ we have $|V_i|=l$, and that $l>3$. The problem \textsc{Multicolored Clique} is \NP-complete and \wone-hard when parameterized by $k$; furthermore, it cannot be solved in time $(k\ell)^{o(k)}$, unless the ETH fails (see, e.g., ~\cite{gareyjohnson_compintr_1982}, ~\cite{1055552815661} and ~\cite{chen2006strong-b6f}).

We start by fixing an arbitrary total order for each $V_i = \{x^i_1,\ldots,x^i_{l}\}, i \in [k]$ in $F$.

Let us now define an ordered graph $G$, based on the input instance graph $F$ of the \textsc{Multicolored Clique} problem.

\begin{defn}[Ordered Graph $G$ from \textsc{Multicolored Clique} to \corechi reduction]
\label{def:GCoreNPC}
\hfill
\end{defn}

For each $i \in [k]$, let $C_i$ be a set of $l$ vertices denoted by
$
C_i=\{v^i_1,\ldots,v^i_l\},
$
which also determines the ordering of the vertices in $C_i$. The sets $C_i$ are mutually disjoint.

Next, let $B_i,i\in[k]$ be a set of $k-1$ vertices denoted as follows.

$$
B_i=\{w^i_1,w^i_2,\ldots,w^i_{i-1},w^i_{i+1},\ldots,w^i_{k}\}
$$

This also determines the order of the vertices in $B_i$, and all sets $B_i,i\in[k]$ are mutually disjoint.

We also define $D_i,i\in[k]$, as a set of $l+k-1$ vertices, in the following way.

$$
D_i=\{d^i_1,\ldots,d^i_l,d^i_{l+1},\ldots,d^i_{l+k-1}\}
$$

This also determines the order of the vertices in $D_i$, and all the sets $D_i,i\in[k]$ are again mutually disjoint.

Now we define a set of vertices $A_i,i\in[k]$ consisting of the sets $C_i, B_i$, where the vertices in $C_i$ are followed by the vertices in $B_i$, as follows,
$
A_i=C_i,B_i.
$
The sets $A_i,i\in[k]$ are mutually disjoint.

We order these sets of vertices in the following way, adding a vertex in the beginning, at the end, and in between sets $D_i,i\in[k]$ and $A_i,i\in[k]$ in the following way.

$$
p_1,D_1,p_2,D_2,\ldots,p_k,D_k,p_{k+1},A_1,p_{k+2},A_2,\ldots,p_{2k},A_k,p_{2k+1}
$$

For fixed $i\in [k]$, we then connect the vertex $p_i$ with all vertices in $D_i$, and we connect the vertex $p_{i+1}$ with all vertices in $D_i$. Also, for fixed $i\in \{k+1,k+2,\ldots,2k\}$, we connect the vertex $p_i$ with all the vertices in $C_i$, and connect the vertex $p_{i+1}$ with all the vertices in $C_i$. We will call both sets of these edges \emph{path edges}.

Then, for all $i\in [k]$ and all $j\in[k], j\not =i$, we connect the vertices $w^j_i$ and $w^i_j$. We will call these edges \emph{complete edges}.

We also add another set of edges in the following way. If $x^{i'_1}_{j'_1}$ and $x^{i'_2}_{j'_2}; i'_1,i'_2\in[k];j'_1,j'_2\in [l]$ are adjacent in $F$ then we add an edge between $v^{i'_1}_{j'_1}$ and $v^{i'_2}_{j'_2}$ in $G$. We will call these edges \emph{original edges}.

We define the last set of edges by forming an ordered matching $M^C_{l+k-1}$ defined above on the vertices $D_i$ and $A_i$. We see that $|V(M^C_{l+k-1})|=|V(D_i)|+|V(A_i)|$, and since $l>3$, these edges will be well defined. We will call them \emph{collapsible edges}.

This completes the definition of $G$. Note that $G$ has $2k+1+2k(l+k-1)$ vertices, and $\chi^<(G)=2k+1+2k=4k+1$.

\emph{Equivalence of instances.}
We claim that there exists a clique in $F$, intersecting every set $V_i,i\in [k]$, if and only if there exists an ordered homomorphism $f$ from $G$ to an ordered subgraph $H$ of $G$.

In the remainder of the proof, we may keep the same notation of vertices in $H$ as we do in the ordered graph $G$ without risk of confusion, since $H$ is an ordered subgraph of $G$.

Suppose that $F$ has a yes-instance of \textsc{Multicolored Clique}, that is,
for each $i \in [k]$ there is $j_i$, such that $C_F = \bigcup_{i \in [k]} \{ x^i_{j_i}\}$ is a clique in $F$.

We define $f: V(G) \to V(H)$, where $H\subset G$, in the following way. Fix $i \in [k]$. We map the vertices in $A_i\in V(G)$ to $v^i_{j_i}\in V(G)$. This is feasible since the vertices in $A_i$ form an interval and $v^i_{j_i}$ is a vertex in $ A_i\in V(G)$. 

We also map the vertices in $D_i\in V(G)$ to the vertex $d^i_j$, which is adjacent by a collapsible edge with $v^i_{j_i}$. This is also feasible since the vertices in $D_i$ form an interval and $d^i_j$ is a vertex in $D_i\in V(G)$.

Then we map $p_i\in V(G)$ to $p_i\in V(G[H]), i=1,\ldots,2k+1$ by identity.

Let us show that $f$ is an ordered homomorphism. Clearly, by the construction of $G$, $f$ respects the ordering of vertices.

Let us show that it preserves the edges.
With fixed $i \in \{k+1,\ldots,2k+1\}$, we see that the path edges connecting $p_i$ with the vertices in $A_i$ in $G$ will map to the edge connecting the vertices $p_i$ and $v^i_{j_i}$ in $H$, and the path edges connecting $p_{i+1}$ with the vertices in $A_i$ in $G$ will map to the edge connecting $p_{i+1}$ and $v^i_{j_i}$ in $H$.

With fixed $i \in [k]$, the same will hold for the path edges connecting $p_i$ with the vertices in $D_i$ in $G$. They will map to the edge connecting the vertices $p_i$ and $d^i_j$ in $H$, and the paths that connect $p_{i+1}$ with the vertices in $D_i$ in $G$ will map to the edge connecting $p_{i+1}$ and $v^i_j$ in $H$.

For complete edges, we see that since there is a complete edge between every pair $B_{i'_1}$ and $B_{i'_2}; i'_i,i'_2\in[k]; i'_i\not =i'_2$ in $G$, if the vertices of each $B_i, i\in [k]$ map to a vertex $v^i_{j_i}$ in $H$, then the complete edges in $H$ will form a complete ordered graph $K_k$. To see this, we can assume for contradiction that there is no complete edge between vertices $v^i_{j_i}$ and $v^j_{j_j}$ in $H$. But there is a complete edge connecting $w^i_{j}$ and $w^j_{i}$ in $G$ that mapped to an edge connecting $v^i_{j_i}$ and $v^j_{j_j}$ in $H$, a contradiction. 

Now, since $C_F = \bigcup_{i \in [k]} \{ x^i_{j_i}\}$ is a clique in $F$, $C_G = \bigcup_{i \in [k]} \{ v^i_{j_i}\}$ is a clique in $G$, therefore, the mapping $f$ will preserve the complete edges.

$G[C_G]$ forming a complete ordered graph $K_k$ is also a reason why the original edges will be preserved, since they will all map to a complete ordered graph $G[C_G]$.

The last set of edges which preservation we will need to check are collapsible edges. But we showed in Lemma ~\ref{lem:MCiCollapsible}, that $M^C_{l+k-1}$ can map to a single edge, so since $\{d^i_j,v^j_{j_j}\}$ is an edge in $H$ and $l>3$, these edges are also preserved by $f$.

We can see that $H$ is an ordered subgraph of $G$ through the following reasoning. The complete ordered graph $G[C_G]\cong K_k$ must exist by the existence of the clique $C_F$ in $F$. Also, the directed ordered path $P_{4k+1}$ connecting all the vertices in $H$ is also the following ordered subgraph in $G$,
$$
p_1,d^1_{j_{v^1}},p_2,d^2_{j_{v^2}},\ldots,p_k,d^k_{j_{v^k}},p_{k+1},v^1_{j_1},\ldots,p_{2k},v^k_{j_k},p_{2k+1}.
$$
As we reasoned above, for each $i\in[k]$, the collapsing edges connecting $d^i_{j_{v^i}}$ and $v^i_{j_i}$ in $H$, are also present in $G$.
There are no other vertices or edges in $H$, and $H$ has $2k+1+2k=4k+1$ vertices. Since $G$ contains the above directed path $P_{4k+1}$, then $\chi^<(G)\ge 4k+1$. So, since $|V(H)|=\chi^<(H)=4k+1$, we get $|V(H)|=\chi^<(G)$.

%\medskip

Now consider an ordered homomorphism $f: V(G) \to V(H), H\subset G$.

%Since the vertices $p_1,v^1_{i_1},p_2,v^1_{i_2},\ldots,p_k,v^k_{i_k},p_{k+1}; i_1,i_2,\ldots,i_k\in l$ in $G$ form an ordered directed path $P_{2k+1}$, we see that $|H|=2k+1$ is equal to $\chi^<(G)$.

Let us first show the following two lemmas, which will help us determine to which vertices in a proper ordered subgraph $H$ of $G$, must vertices in $G$ map.

%The first lemma will show us where $f$ must map the vertices $p_i, i\in [2k+1]$. %The complete proof of this lemma is omitted here, but will be provided in a future full version of the article.

\begin{lemma}
\label{lem:pitopiNPC}
    Let $G$ be an ordered graph defined in Definition ~\ref{def:GCoreNPC}, and $f:V(G)\to V(H), H\subset G$ be an ordered homomorphism. Then for each $i\in [2k+1]$, $f$ maps the vertex $p_i$ in $G$ to the vertex $p_i$ in $H$.
\end{lemma}

\begin{proof}
We see that for $i\in [k-1]$ vertex $p_{i+k+1}$ can map only to vertex $p_{i+k+1}$, since it is connected to vertex $v^{i+1}_1$, none of the vertices in $B_{i}$ are connected to $v^{i+1}_1$ (so if mapped to vertices in $B_{i}$, this mapping would not preserve edges incident to $p_{i+k+1}$), and $p_{i+k+1}$ is also connected to $v^{i}_{l}$. 

Since none of the vertices in $B_{k}$ is connected to $v^{k}_l$, and $p_{2k+1}$ is connected to $v^{k}_{l}$, this is also the reason why $p_{2k+1}$ in $G$ must map to $p_{2k+1}$ in $H$.

The vertex $p_{k+1}$ in $G$ can only be mapped to the vertex $p_{k+1}$ in $H$, since it is connected to neighboring vertices $v^1_1$ on the right and $d^k_{l+k-1}$ on the left.

We see that for $i\in [k]$ vertex $p_{i}$ can map only to vertex $p_{i}$, since it is connected to neighboring vertices $d^{i+1}_1$ on the right and $d^{i}_{l+k-1}$ on the left from $p_{i}$.
\end{proof}

%The following lemma will show us where $f$ must map the remaining vertices of $G$. %We will again omit a full proof of this lemma, which will be included in a forthcoming full version of the article.

\begin{lemma}
\label{lem:AitoCiNPC}
    Let $G$ be an ordered graph defined in Definition ~\ref{def:GCoreNPC}, and $f:V(G)\to V(H), H\subset G$ be an ordered homomorphism. Then for each $i\in [k]$, $f$ maps all the vertices in $A_i$ to a vertex $v^i_{j_i},j_i\in[l]$ in $C_i$, and all the vertices in $D_i$ to a vertex $d^i_{j'_i},j'_i\in[l+k-1]$ in $D_i$, where $v^i_{j_i}$ and $d^i_{j'_i}$ are connected by collapsing edge.
\end{lemma}

\begin{proof}
We begin the proof by showing where the vertices in $G$ can possibly map.

Lemma ~\ref{lem:pitopiNPC} shows us that all of the $2k+1$ vertices $p_i, i\in[2k+1]$ in $G$ will map by identity to vertices $p_i$ in $H\subset G$.

Let us fix $i\in[k]$. We see that the vertices in $B_i$ can only map to the vertices in $A_i$, since the vertex $p_{k+i+1}$ is connected only to vertices in $C_i$ and $C_{i+1}$, while the vertices in $B_i$ are connected to other vertices in $B_j$, and vertex $p_{k+i}$ is connected to vertex $v^i_1$ by an edge, so vertices in $B_i$ cannot map to $p_{k+i}$, since this mapping would not preserve the order of vertices. 

We also see that vertex $v$ in $B_i$ can only map to the vertices in $C_i$, since other vertices in $B_i$ could not preserve edges incident to $v$, if $v$ mapped to other vertices in $B_i$. Therefore the vertices in $B_i$ can either map to the vertices in $C_i$, or they map to themselves by identity.

The vertices in $C_i$ can only map to the vertices in $C_i$, since no other vertices are connected to $p_{i+k}$ and $p_{i+k+1}$.

Also, vertices in $D_i$ can only map to the vertices in $D_i$, since no other vertices are connected to $p_{i}$ and $p_{i+1}$.

We therefore see that vertices in $A_i$ and $D_i, i\in [k]$ are the only vertices in $G$ that can map to other vertices in $G$, since by Lemma ~\ref{lem:pitopiNPC} vertices $p_{i'},i'\in [2k+1]$ must map by identity.

Next aspect we notice is that since $H\subset G$, at least one vertex in $G$ will map to the subgraph of $G$ by non-identity mapping. That means that in at least one $D_i$ or $A_i$, at least one vertex will map to other vertex than itself.

But from Lemma ~\ref{lem:MCiCollapsible} we know that since every $D_i$ and $A_i, i\in [k]$ form an edge-collapsible ordered graph $M^C_{l+k-1}$  and $l>3$, if any vertex in $D_i$ or $A_i$ maps to some other vertex, then all vertices in $D_i$ must map to just one vertex in $D_i$, and all the vertices in $A_i$ must map to just one vertex in $A_i$, and from the reasoning above, more precisely all the vertices in $A_i$ must map to just one vertex in $C_i$.

Let us now show that if all the vertices in one $B_i$ map to just one vertex in $C_i$, then also all vertices in other $B_j, j\in [k], j\not =i$ must map to just one vertex in $C_j$.

We will show this by contradiction, assuming that all vertices in other $B_j, j\in [k], j\not =i$ will map by identity to an ordered subgraph of $G$.

Note that vertices in other $B_j, j\in [k], j\not =i$ that are adjacent in $G$ to a vertex in $B_i$, are adjacent to a vertex in $B_i$ in $G$ that has degree one. However, if all the vertices in $B_i$ will map to a single vertex $v$ in $C_i$, the vertex $v$ will have degree $k-1$. Therefore, in order for $H$ to be an ordered subgraph of $G$, the vertices in $B_j, j\in [k], j\not =i$ that are adjacent to vertices in $B_i$ in $G$ cannot map by identity, a contradiction.

Since vertices in our $B_i$ are attached to every $B_j, j\in [k], j\not =i$, by the argument presented above (using Lemma ~\ref{lem:MCiCollapsible}), all the vertices in every $B_j, j\in [k], j\not =i$ will need to map to a single vertex in their corresponding sets $C_j, j\in [k], j\not =i$.

But once again by Lemma ~\ref{lem:MCiCollapsible}, for each fixed $j\in [k], j\not =i$, if the vertices in each $B_j$ map to just one vertex in $C_j$, then also all vertices in $D_j$ and all vertices in $A_j$ must map to just one vertex in $D_j$ and one vertex in $A_j$, respectively.

In order for $f$ to preserve edges, it is clear that for every $i\in [k]$, $d^i_{j'_i}, j'_i\in [l+k-1]$ must be such that it is connected to $v^i_{j_i}, j_i\in [l]$ by an edge in $G$, and by construction this is a collapsing edge.
\end{proof}

%Lemma ~\ref{lem:AitoCi} tells us, that for fixed $i\in[k]$, $A_i$ maps to a single vertex in $C_i$. Then the vertices, because if the vertices in each $B_i$ mapped to their corresponding single vertex in $C_i$, then , since the vertices in $B_i$ connect each $A_i$. 

%But then in order for $f:V(G)\to V(H), H\subset G, |H|=2k+1$ to exist, there must exist a vertex $v^i_{i_j}$ in each $C_i$, such that $\bigcup_{i \in [k]} \{v^i_{i_j}\}$ forms an ordered complete graphs 

%This mean that in order for the ordered subgraph $H$ of $G$ to have $2k+1$ vertices, each of the $k$ sets $A_j, j\in [k]$ will need to map to a single vertex

Now we can continue and finish the proof of Theorem ~\ref{thm:corechiNPC}.

From Lemma ~\ref{lem:AitoCiNPC}, for each $i \in [k]$ there is exactly one $j_i \in [l]$ such that $f(v^i_{j_i})=v^i_{j_i}$.
Putting $C_G = \bigcup_{i \in [k]} \{v^i_{j_i}\}$, we claim that $C_G$ is a clique in $G$.
For contradiction, suppose that $v^i_{j_i}$ is not adjacent to~$v^{i'}_{j_{i'}}; i,i'\in [k]; i\not =i'; j_{i'},j_{i}\in [l] $.
However, there is an edge between the vertices $w^{i}_{i'}\in B_i$, and $w^{i'}_{i}\in B_{i'}$, by the definition of $G$. And $w^{i}_{i'}, w^{i'}_{i}$ are mapped to $v^i_{j_i}, v^{i'}_{j_{i'}}$, respectively; therefore, we obtain a contradiction.

Therefore, from the construction of $G$ based on $F$, we can define $C_F = \bigcup_{i \in [k]} \{x^i_{j_i}\}$, which will be a clique in $F$, intersecting every set $V_i,i\in [k]$.

%\medskip

%We can see that $H$ is an ordered subgraph of $G$, since the ordered complete graph $G[C_G]\cong K_k$ must exist by the mapping of the sets of vertices $B_i,i\in[k]$, and a directed ordered path $P_{2k+1}$ connecting all the vertices in $H$ is also an ordered subgraph $p_1,v^1_{j_1},\ldots,p_k,v^k_{j_k},p_{k+1}$ of $G$. There are no other vertices or edges in $H$, and $H$ has clearly $2k+1$ vertices.

Since $|V(G)| = 2k+1+2k(l+k-1)=2kl+2k^2+1$ and $|V(H)| = 4k+1$, we see that this is a polynomial reduction; and since again $|V(H)|=\chi^<(G)$, the claimed hardness statements of the \corechi (parameterized) problem follow.
%The remaining point to show is that \corechi is in \NP. But this is clear since the ordered homomorphism certificate can be verified in polynomial time.
This completes the proof.
\end{proof}

The following result then follows. %for \core.

\begin{corollary}
\label{cor:coreNPC}
\core is \NP-complete and \wone-hard when parameterized by the number of the vertices of the image ordered proper subgraph of the instance ordered graph.
        Furthermore, the problem in $n$-vertex instances cannot be solved in time $2^{o(n)}$ nor in time $n^{o(k)}$, unless the ETH fails.
\begin{comment}
        \core is \NP-complete and \wone-hard when parameterized by the ordered chromatic number $k$ of the instance graph.
        Furthermore, the problem in $n$-vertex instances cannot be solved in time $2^{o(n)}$ nor in time $n^{o(k)}$, unless the ETH fails.
\end{comment}
\end{corollary}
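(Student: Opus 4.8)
The plan is to obtain the corollary as a direct consequence of Theorem~\ref{thm:corechiNPC}: an algorithm for \core decides the promise problem \corechi on the very same input graph. Concretely, given an instance $G$ of \corechi (so $\chi^<(G)=k$, with the promise that either $G$ is a core or the core of $G$ has exactly $k$ vertices), I would simply run \core on $G$. If $G$ is a core, \core answers \textsc{No}; if the core of $G$ has $k<|V(G)|$ vertices, then $G$ admits a non-surjective ordered endomorphism onto its core, so \core answers \textsc{Yes}. Hence a polynomial-time (respectively \textsf{FPT}, or $2^{o(n)}$-time, or $n^{o(k)}$-time) algorithm for \core would yield one for \corechi, so the \NP-hardness, \wone-hardness, and ETH lower bounds of Theorem~\ref{thm:corechiNPC} all transfer. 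Membership of \core in \NP was already noted when the problem was introduced (a proposed non-surjective ordered homomorphism $G\to G$ is verifiable in polynomial time), so \core is \NP-complete.

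For the parameterized and fine-grained part I would trace the relevant quantities through the reduction used in Theorem~\ref{thm:corechiNPC} from \textsc{Multicolored Clique} with $k$ colour classes. There the constructed graph $G$ has $|V(G)|=2kl+2k^2+1$ vertices and $\chi^<(G)=4k+1$; moreover, as shown in that proof, whenever $G$ is not a core it maps to an ordered subgraph $H$ with $|V(H)|=\chi^<(G)=4k+1$, and $H$ contains a spanning directed path, so every ordered endomorphism of $H$ is injective and hence surjective, i.e.\ $H$ is itself a core and therefore (by Theorem~\ref{thm:Unique}) the core of $G$. Thus the number of vertices of the image proper subgraph is exactly $4k+1$, an affine function of the \textsc{Multicolored Clique} parameter. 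Since that problem is \wone-hard when parameterized by $k$ and admits no $(k\ell)^{o(k)}$-time algorithm under ETH, it follows that \core is \wone-hard parameterized by $|V(H)|$, and --- writing $n=|V(G)|$ and $k=(|V(H)|-1)/4$ --- cannot be solved in time $2^{o(n)}$ nor in time $n^{o(k)}$ unless ETH fails, which is precisely the statement inherited from Theorem~\ref{thm:corechiNPC}.

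The only point that needs careful phrasing, rather than extra work, is interpretational: as $H$ is not part of the input to \core, ``parameterized by the number of vertices of the image ordered proper subgraph'' must be read as parameterizing by the size of the core of $G$ (equivalently, by the least $|V(H)|$ over proper ordered subgraphs $H$ with $G\to H$, whenever one exists). On the instances built by the reduction this parameter is pinned to $4k+1$ and coincides with $\chi^<(G)$, which is exactly why the promise structure of \corechi lets the hardness pass through without loss; no gap can appear because \corechi already splits its instances into the two cases ``$G$ is a core'' and ``the core of $G$ has $\chi^<(G)$ vertices'', and these are separated verbatim by \core. I expect everything else to be routine bookkeeping rather than a real obstacle.
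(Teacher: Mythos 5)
Your proposal is correct and follows essentially the same route as the paper, whose proof simply states that all hardness claims follow directly from Theorem~\ref{thm:corechiNPC} and that membership in \NP{} is clear from polynomial-time certificate verification. Your write-up merely makes explicit what the paper leaves implicit (that \core{} decides the promise problem \corechi{} on the same instance, and that the parameter $|V(H)|$ is pinned to $4k+1=\chi^<(G)$ in the reduction), which is a faithful and somewhat more careful elaboration of the intended argument.
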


%\iffalse

\begin{proof}
All hardness statements follow directly from Theorem ~\ref{thm:corechiNPC}.

The remaining point to show is that \core is in \NP. But this is clear since the ordered homomorphism certificate can be verified in polynomial time.
\end{proof}
%\fi

\begin{comment}

We also notice that Theorem ~\ref{thm:corechiNPC} gives us an easy but important consequence for the \core problem parameterized by $|V(H)|$. Let us define this problem.

\begin{prb}
\label{Prb:Corek}
\problemStatement{\corek}
  {Ordered graph $G$.}
  {Is there an ordered homomorphism $G\to H$, where $H$ is a proper ordered subgraph of $G$ and $|H|=k$?}
\end{prb}

Then the following holds.

\begin{corollary}
\label{cor:coreWone}
    \corek parameterized by $k$ is \wone-hard.
\end{corollary}

\begin{proof}
    The statement follows from the Theorem ~\ref{thm:corechiNPC}. I.e., we see that \corek is more general than \corechi, since we can set $k=\chi^<(G)$. Therefore, the \wone-hardness of \corek follows.
\end{proof}

\end{comment}

We note that the \wone-hardness of \corechi may come as surprising since, placed in contrast to the fixed parameter tractability of the problem of finding homomorphisms $G\to H$, for given ordered graphs $G, H$, parameterized by $|H|=\chi^<(G)$ (see ~\cite{certik_complexity_2025}), shows that two seemingly similar parameterized problems might have very different parameterized complexity.

We conclude this article by showing that the \corek and \corechi problems parameterized by $k$ are in \XP.

\begin{obs}
    \corek and \corechi on $n$-vertex instances can be solved in time $n^{\Oh(k)}$. Thus, these problems are in \XP, when parameterized by $k$.
\end{obs}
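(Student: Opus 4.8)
The plan is to exploit the interval structure of ordered homomorphisms to enumerate, in time $n^{\Oh(k)}$, all candidate ordered homomorphisms from $G$ onto a subgraph with exactly $k$ vertices, and then to test each candidate in polynomial time; since $n^{\Oh(k)}$ is already an \XP{} bound, plain brute force is enough and nothing more clever is needed.

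Recall (as noted right after the definition of ordered homomorphism) that the preimage of every vertex of $H$ under an ordered homomorphism $f\colon G\to H$ is a contiguous interval of the ordering of $V(G)$. Consequently, when $|V(H)|=k$, the map $f$ is completely described by (i) an ordered partition of $V(G)$ into some number $k'\le k$ of consecutive nonempty intervals $I_1<I_2<\dots<I_{k'}$, and (ii) a choice of distinct vertices $s_1<s_2<\dots<s_{k'}$ of $G$ with $f(I_a)=\{s_a\}$ for each $a$; any remaining $k-k'$ vertices of $V(H)$ may be taken to be arbitrary further vertices of $G$ incident to no edge of the image. I would enumerate all such descriptions: there are at most $n^{k}$ ways to place the at most $k-1$ interval breakpoints and at most $n^{k}$ ways to pick the image vertices, giving $n^{\Oh(k)}$ candidates in total. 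For each candidate I would check, in polynomial time, that it yields an ordered homomorphism onto a proper ordered subgraph $H\subseteq G$ with $|V(H)|=k$: order-preservation holds by construction, so it suffices to verify that each $I_a$ is an independent set of $G$, that $s_as_b\in E(G)$ for every edge $uv\in E(G)$ with $u\in I_a$ and $v\in I_b$ (which forces $a\neq b$), and that $H\neq G$. Answering YES iff some candidate passes all checks decides \corek. The only step needing a little care is the middle condition: one must notice that ``$s_as_b\in E(G)$ for all relevant pairs'' is exactly what makes the image $H=\bigl(\{s_1,\dots,s_k\},\{f(u)f(v):uv\in E(G)\}\bigr)$ a subgraph of $G$, so that ``$H\subseteq G$'' reduces to this finite list of edge constraints on the chosen $s_a$; everything else is routine bookkeeping, and there is no deeper obstacle here.

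For \corechi I would simply run the \corek routine with the given value $k=\chi^<(G)$. Since $G\to H$ implies $\chi^<(G)\le\chi^<(H)\le|V(H)|$, every ordered homomorphic image of $G$ has at least $k$ vertices; hence $G$ maps to \emph{some} proper ordered subgraph precisely when it maps onto one with exactly $k$ vertices. By Theorem~\ref{thm:Unique} the former is equivalent to ``$G$ is not a core''. Moreover, if $f\colon G\to H$ is such a witness with $|V(H)|=k$, then $G$ and $H$ are ordered-homomorphically equivalent (the inclusion $H\hookrightarrow G$ is an ordered homomorphism), so by Theorem~\ref{thm:Unique} the core of $G$ coincides with the core of $H$ and therefore has at most $|V(H)|=k$ vertices, while $\chi^<(G)=k$ forces it to have at least $k$; hence it has exactly $k$. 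Thus the \corek routine returning YES is precisely case~1 of \corechi and returning NO is precisely case~2. Both algorithms run in time $n^{\Oh(k)}$, which places \corek and \corechi in \XP, when parameterized by $k$.
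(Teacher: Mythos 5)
Your proof is correct, but it takes a genuinely different (and more self-contained) route than the paper. The paper's algorithm guesses only the $k$-element vertex set $X$ of the prospective core ($\Oh(n^k)$ choices) and then delegates the entire feasibility check to the polynomial-time algorithm for \ret{} from Theorem~\ref{thm:RetInPP}; this implicitly uses that, by Theorem~\ref{thm:Unique}, a non-core $G$ retracts onto its core, which is an induced subgraph $G[X]$. You instead brute-force the homomorphism itself: using the interval structure of ordered homomorphisms you enumerate both the partition of $V(G)$ into at most $k$ consecutive intervals and the increasing sequence of image vertices, and verify each candidate by elementary independence and edge checks. What your approach buys is independence from the \twosat-based retraction machinery (and from the observation that the target may be taken to be an induced subgraph reached by a retraction); what the paper's approach buys is a smaller search space and a shorter argument. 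Both land at $n^{\Oh(k)}$, so the difference is purely expository. One small point to tighten in your \corechi{} part: the assertion that ``$G$ maps to some proper ordered subgraph precisely when it maps onto one with exactly $k$ vertices'' is not true for arbitrary $G$ (the core could have strictly between $k$ and $n$ vertices); it is the \emph{promise} of \corechi --- that either $G$ is a core or its core has exactly $k=\chi^<(G)$ vertices --- that makes your YES/NO correspondence with cases 1 and 2 valid, and your subsequent sentences already supply exactly the facts needed for that promise-based reading, so this is a matter of phrasing rather than substance. (The paper also silently relies on the promise in the same way.)
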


\begin{proof}
    For \corechi, we can compute $\chi^<(G)=k$ in polynomial time (see ~\cite{certik_complexity_2025}). Then for both problems \corek and \corechi parameterized by $k$, we can exhaustively guess the $k$ vertices of the core (giving us $\Oh(n^k)$ options), and then check in polynomial time if there is a retraction, using Theorem ~\ref{thm:RetInPP}.
\end{proof}

\bibliographystyle{splncs04}
\bibliography{mybibliography}
\end{document}